\documentclass[preprint,12pt]{elsarticle}

\usepackage{amssymb}
\usepackage{amsthm}
\usepackage{balance}
\newtheorem{thm}{Theorem}[section]

\newtheorem{lem}[thm]{Lemma}
\newtheorem{defn}[thm]{Definition}
\newtheorem{remark}[thm]{Remark}
\pdfpagewidth=8.5in \pdfpageheight=11in

\journal{arXiv}

\begin{document}

\begin{frontmatter}

\title{On the Windfall and Price of Friendship:\\Inoculation Strategies on Social Networks}

\author{Dominic Meier$^1$, Yvonne Anne Pignolet$^2$,\\Stefan Schmid$^3$, Roger Wattenhofer$^1$\\
$^1$ ETH Zurich, Switzerland\\
$^2$ ABB Research, Switzerland\\
$^3$ T-Labs / TU Berlin, Germany\\
}

\address{}

\begin{abstract}
This article investigates selfish behavior in games where players
are embedded in a social context. A framework is presented which
allows us to measure the \emph{Windfall of Friendship}, i.e., how
much players benefit (compared to purely selfish environments) if
they care about the welfare of their friends in the social network
graph. As a case study, a virus inoculation game is examined. We
analyze the corresponding Nash equilibria and show that the Windfall
of Friendship can never be negative. However, we find that if the
valuation of a friend is independent of the total number of friends,
the social welfare may not increase monotonically with the extent to
which players care for each other; intriguingly, in the
corresponding scenario where the relative importance of a friend
declines, the Windfall is monotonic again. This article also studies
convergence of best-response sequences. It turns out that in social
networks, convergence times are typically higher and hence
constitute a price of friendship. While such phenomena may be known
on an anecdotal level, our framework allows us to quantify these
effects analytically. Our formal insights on the worst case
equilibria are complemented by simulations shedding light onto the
structure of other equilibria.
\end{abstract}

\begin{keyword}
Game Theory, Social Networks, Equilibria, Virus Propagation,
Windfall of Friendship
\end{keyword}

\end{frontmatter}

\section{Introduction}\label{sec:intro}

Social networks have existed for thousands of years, but it was not
until recently that researchers have started to gain scientific
insights into phenomena like the \emph{small world property}. The
rise of the Internet has enabled people to connect with each other
in new ways and to find friends sharing the same interests from all
over the planet. A social network on the Internet can manifest
itself in various forms. For instance, on \emph{Facebook}, people
maintain virtual references to their friends. The contacts stored on
mobile phones or email clients form a social network as well. The
analysis of such networks---both their static properties as well as
their evolution over time---is an interesting endeavor, as it
reveals many aspects of our society in general.

A classic tool to model human behavior is \emph{game theory}. It has
been a fruitful research field in economics and sociology for many
years. Recently, computer scientists have started to use game theory
methods to shed light onto the complexities of today's highly
decentralized networks. Game theoretic models traditionally assume
that people act autonomously and are steered by the desire to
maximize their benefits (or utility). Under this assumption, it is
possible to quantify the performance loss of a distributed system
compared to situations where all participants collaborate perfectly.
A widely studied measure which captures this loss of social welfare
is the \emph{Price of Anarchy} (PoA). Even though these concepts can
lead to important insights in many environments, we believe that in
some situations, the underlying assumptions do not reflect reality
well enough. One such example are social networks: most likely
people act less selfishly towards their friends than towards
complete strangers. Such altruistic behavior is typically not
considered in game-theoretic models.

In this article, we propose a game theoretic framework for social
networks. Social networks are not only attractive to their
participants, e.g., it is well-known that the user profiles are an
interesting data source for the PR industry to provide tailored
advertisements. Moreover, social network graphs can also be
exploited for attacks, e.g., email viruses using the users' address
books for propagating, worms spreading on mobile phone networks and
over the Internet telephony tool Skype have been reported (e.g.,
\cite{MobilePhoneMalware}). This article investigates rational
inoculation strategies against such viruses from our game theoretic
perspective, and studies the propagation of such viruses on the
social network.

\subsection{Our Contribution}

This article makes a first step to combine two active threads of
research: social networks and game theory. We introduce a framework
taking into consideration that people may care about the well-being
of their friends. In particular, we define the \emph{Windfall of
Friendship} (WoF) which captures to what extent the social welfare
improves in social networks compared to purely selfish systems.

In order to demonstrate our framework, as a case study, we provide a
game-theoretic analysis of a \emph{virus inoculation game}.
Concretely, we assume that the players have the choice between
inoculating by buying anti-virus software and risking infection. As
expected, our analysis reveals that the players in this game always
benefit from caring about the other participants in the social
network rather than being selfish. Intriguingly, however, we find
that the Windfall of Friendship may not increase monotonically with
stronger relationships. Despite the phenomenon being an
``ever-green'' in political debates, to the best of our knowledge,
this is the first article to quantify this effect formally.

This article derives upper and lower bounds on the Windfall of
Friendship in simple graphs. For example, we show that the Windfall
of Friendship in a complete graph is at most $4/3$; this is tight in
the sense that there are problem instances where the situation can
indeed improve this much. Moreover, we show that in star graphs,
friendship can help to eliminate undesirable equilibria. Generally,
we discover that even in simple graphs the Windfall of Friendship
can attain a large spectrum of values, from constant ratios up to
$\Theta(n)$, $n$ being the network size, which is asymptotically
maximal for general graphs.

Also an alternative friendship model is discussed in this article
where the relative importance of an individual friend declines with
a larger number of friends. While the Windfall of Friendship is
still positive, we show that the non-monotonicity result is no
longer applicable. Moreover, it is proved that in both models,
computing the best and the worst friendship Nash equilibrium is
$\mathcal{NP}$-hard.

The paper also initiates the discussion of implications on
convergence. We give a potential function argument to show
convergence of best-response sequences in various models and for
simple, cyclic graphs. Moreover, we report on our simulations which
indicate that the convergence times are typically higher in social
contexts, and hence constitute a certain price of friendship.

Finally, to complement our formal analysis of the worst equilibria,
simulation results for average case equilibria are discussed.

\subsection{Organization}

The remainder of this article is organized as follows.
Section~\ref{sec:relwork} reviews related work and
Section~\ref{sec:model} formally introduces our model and framework.
The Windfall of Friendship on general graphs and on special graphs
is studied in Sections~\ref{sec:general} and~\ref{sec:cliquestar}
respectively. Section~\ref{sec:relative} discusses an alternative
model where the relative importance of a friend declines if the
total number of friends increases. Aspects of best-response
convergence and implications are considered in
Section~\ref{sec:convergence}. We report on simulations in
Section~\ref{sec:simulations}. Finally, we conclude the article in
Section~\ref{sec:conclusion}.

\section{Related Work}\label{sec:relwork}

Social networks are a fascinating topic not only in social sciences,
but also in ethnology, and psychology. The advent of social networks
on the Internet, e.g., \emph{Facebook}, \emph{LinkedIn},
\emph{MySpace}, \emph{Orkut}, or \emph{Xing}, to name but a few,
heralded a new kind of social interactions, and the mere scale of
online networks and the vast amount of data constitute an
unprecedented treasure for scientific studies. The topological
structure of these networks and the dynamics of the user behavior
has a mathematical and algorithmic dimension, and has raised the
interest of mathematicians and engineers accordingly.

The famous \emph{small world experiment}~\cite{milgram} conducted by
Stanley Milgram 1967 has gained attention by the algorithm
community~\cite{kleinberg} and inspired research on topics such as
decentralized search algorithms~\cite{Kleinberg06, manku04}, routing
on social networks~\cite{Frai04, kleinberg,lino05} and the
identification of communities~\cite{Flake02,newman04}. The dynamics
of epidemic propagation of information or diseases has been studied
from an algorithmic perspective as well~\cite{kleinbergchap,
leskovec2007}. Knowledge on effects of this cascading behavior is
useful to understand phenomena as diverse as word-of-mouth effects,
the diffusion of innovation, the emergence of bubbles in a financial
market, or the rise of a political candidate. It can also help to
identify sets of influential players in networks where marketing is
particularly efficient (\emph{viral marketing}). For a good overview
on economic aspects of social networks, we refer the reader to
\cite{soceco}, which, i.a., compares random graph theory with game
theoretic models for the formation of social networks.

Recently, game theory has also received much attention by computer
scientists. This is partly due to the various actors and
stake-holders who influence the decentralized growth of the
Internet: game theory is a useful tool to gain insights into the
Internet's socio-economic complexity. Many aspects have been studied
from a game-theoretic point of view, e.g., \emph{routing}
\cite{roughgarden,howbad}, \emph{multicast transmissions}
\cite{papamulticast}, or \emph{network creation}
\cite{fabrikant,lg}. Moreover, computer scientists are interested in
the algorithmic problems offered by game theory, e.g., on the
existence of pure equilibria~\cite{papadi}.

This article applies game theory to social networks where players
are not completely selfish and autonomous but have friends about
whose well-being they care to some extent. We demonstrate our
mathematical framework with a virus inoculation game on social
graphs. There is a large body of literature on the propagation of
viruses~\cite{Virus1,Virus2,OtherGraphs2,OtherGraphs1,OtherGraphs3}.
Miscellaneous misuse of social networks has been reported, e.g.,
\emph{email viruses}\footnote{E.g., the Outlook worm
\emph{Worm.ExploreZip}.} have used address lists to propagate to the
users' friends. Similar vulnerabilities have been exploited to
spread worms on the \emph{mobile phone
network}~\cite{MobilePhoneMalware} and on the Internet telephony
tool \emph{Skype}\footnote{See
http://news.softpedia.com/news/Skype-Attacked-By-Fast-Spreading-Virus-52039.shtml.}.

There already exists interesting work on game theoretic and epidemic
models of propagation in social networks. For instance, Montanari
and Saberi~\cite{guest1} attend to a game theoretic model for the
diffusion of an innovation in a network and characterize the rate of
convergence as a function of graph structure. The authors highlight
crucial differences between game theoretic and epidemic models and
find that the spread of viruses, new technologies, and new political
or social beliefs do not have the same viral behavior.

The articles closest to ours are~\cite{virusgame,bg}. Our model is
inspired by Aspnes et al.~\cite{virusgame}. The authors apply a
classic game-theoretic analysis and show that selfish systems can be
very inefficient, as the Price of Anarchy is $\Theta(n)$, where $n$
is the total number of players. They show that computing the social
optimum is $\mathcal{NP}$-hard and give a reduction to the
combinatorial problem \emph{sum-of-squares partition}. They also
present a $O(\log^2{n})$ approximation. Moscibroda et al.~\cite{bg}
have extended this model by introducing malicious players in the
selfish network. This extension facilitates the estimation of the
robustness of a distributed system to malicious attacks. They also
find that in a non-oblivious model, intriguingly, the presence of
malicious players may actually \emph{improve} the social welfare. In
a follow-up work~\cite{sirocco10} which generalizes the social
context of~\cite{bg} to arbitrary bilateral relationships, it has
been shown that there is no such phenomenon in a simple network
creation game. The \emph{Windfall of Malice} has also been studied
in the context of congestion games~\cite{windfallmalice} by Babaioff
et al. In contrast to these papers, our focus here is on social
graphs where players are concerned about their friends' benefits.

There is other literature on game theory where players are
influenced by their neighbors. In \emph{graphical economics}
\cite{geco1,geco2}, an undirected graph is given where an edge
between two players denotes that free trade is allowed between the
two parties, where the absence of such an edge denotes an embargo or
an other restricted form of direct trade. The payoff of a player is
a function of the actions of the players in its neighborhood only.
In contrast to our work, a different equilibrium concept is used and
no social aspects are taken into consideration.

Note that the nature of game theory on social networks also differs
from \emph{cooperative games} (e.g.,~\cite{coop}) where each
coalition $C\subseteq 2^V$ of players $V$ has a certain
characteristic cost or payoff function $f: 2^V \rightarrow
\mathbb{R}$ describing the collective payoff the players can gain by
forming the coalition. In contrast to cooperative games, the
``coalitions'' are fixed, and a player participates in the
``coalitions'' of all its neighbors.

A preliminary version of this article appeared at ACM EC
2008~\cite{ec08}, and there have been several interesting results
related to our work since then. For example,~\cite{foes} studies
auctions with spite and altruism among bidders, and presents
explicit characterizations of Nash equilibria for first-price
auctions with random valuations and arbitrary spite/altruism
matrices, and for first and second price auctions with arbitrary
valuations and so-called regular social networks (players have same
out-degree). By rounding a natural linear program with
region-growing techniques, Chen et al.~\cite{ec10} present a better,
$O(\log z)$-approximation for the best vaccination strategy in the
original model of~\cite{virusgame}, where $z$ is the support size of
the outbreak distribution. Moreover, the effect of autonomy is
investigated: a benevolent authority may suggest which players
should be vaccinated, and the authors analyze the ``Price of Opting
Out'' under partially altruistic behavior; they show that with
positive altruism, Nash equilibria may not exist, but that the price
of opting out is bounded.

We extend the conference version of this article~\cite{ec08} in
several respects. The two most important additions concern
\emph{relative friendship} and \emph{convergence}. We study an
additional model where the relative importance of a neighbor
declines with the total number of friends and find that while
friendship is still always beneficial, the non-monotonicity result
no longer applies: unlike in the absolute friendship model, the
Windfall of Friendship can only increase with stronger social ties.
In addition, we initiate the study of convergence issues in the
social network. It turns out that it takes longer until an
equilibrium is reached compared to purely selfish environments and
hence constitutes a price of friendship. We present a potential
function argument to prove convergence in some simple cyclic
networks, and complement our study with simulations on Kleinberg
graphs. We believe that the existence of and convergence to social
equilibria are exciting questions for future research (see also the
related fields of \emph{player-specific utilities}~\cite{milchtaich}
and \emph{local search complexity}~\cite{pls}). Finally, there are
several minor changes, e.g., we improve the bound in
Theorem~\ref{thm:monotone} from $n>7$ to $n>3$.

\section{Model}\label{sec:model}

This section introduces our framework. In order to gain insights
into the Windfall of Friendship, we study a virus inoculation game
on a social graph. We present the model of this game and we show how
it can be extended to incorporate social aspects.

\subsection{Virus Inoculation Game}\label{ssec:virus}

The virus inoculation game was introduced by~\cite{virusgame}. We
are given an undirected network graph $G=(V,E)$  of $n=|V|$ players
(or nodes) $p_i\in V$, for $i=1,\dots,n$, who are connected by a set
of edges (or \emph{links}) $E$. Every player has to decide whether
it wants to \emph{inoculate} (e.g., purchase and install anti-virus
software) which costs $C$, or whether it prefers saving money  and
facing the risk of being infected. We assume that being infected
yields a damage cost of $L$ (e.g., a computer is out of work for $L$
days). In other words, an instance $I$ of a game consists of a graph
$G=(V,E)$, the inoculation cost $C$ and a damage cost $L$. We
introduce a variable $a_{i}$ for every player $p_i$ denoting $p_i$'s
chosen \emph{strategy}. Namely, $a_{i} = 1$ describes that player
$p_i$ is protected whereas for a player $p_j$ willing to take the
risk, $a_{j} = 0$. In the following, we will assume that $a_{j}\in
\{0,1\}$, that is, we do not allow players to \emph{mix} (i.e., use
probabilistic distributions over) their strategies. These choices
are summarized by the \textit{strategy profile}, the vector $\vec{a}
= (a_{1},\dots,a_{n})$. After the players have made their decisions,
a virus spreads in the network. The propagation model is as follows.
First, one player $p$ of the network is chosen uniformly at random
as a starting point. If this player is inoculated, there is no
damage and the process terminates. Otherwise, the virus infects $p$
and all unprotected neighbors of $p$. The virus now propagates
recursively to their unprotected neighbors. Hence, the more insecure
players are connected, the more likely they are to be infected. The
vulnerable region (set of players) in which an insecure player
$p_i$ lies is referred to as $p_i$'s \textit{attack component}. %More
%technically, let $I_{\vec{a}}$ denote the set of secure players,
%then we associate an attack graph $G_{\vec{a}}=G\setminus
%I_{\vec{a}}$ with $\vec{a}$.

We only consider a limited region of the parameter space to avoid
trivial cases. If the cost $C$ is too large, no player will
inoculate, resulting in a totally insecure network and therefore all
players eventually will be infected. On the other hand, if $C<<L$,
the best strategy for all players is to inoculate. Thus, we will
assume that $C\leq L$ and $C> L/n$ in the following.

In our game, a player has the following expected cost:
\begin{defn}[Actual Individual Cost]\label{actualcost}$\\$
The \emph{actual individual cost} of a player $p_i$ is defined as
\begin{displaymath}
c_a(i,\vec{a}) = a_{i} \cdot C + (1-a_i) L \cdot \frac{k_i}{n}
\end{displaymath}
where $k_i$ denotes the size of $p_i$'s attack component. If $p_i$
is inoculated, $k_i$ stands for the size of the attack component
that would result if $p_i$ became insecure. In the following, let
$c_a^0(i,\vec{a})$ refer to the actual cost of an insecure and
$c_a^1(i,\vec{a})$ to the actual cost of a secure player $p_i$.
\end{defn}
The total \emph{social cost} of a game is defined as the sum of the
cost of all participants: $C_a(\vec{a}) = \sum_{p_i\in V}
c_a(i,\vec{a})$.

Classic game theory assumes that all players act selfishly, i.e.,
each player seeks to minimize its individual cost. In order to study
the impact of such selfish behavior, the solution concept of a
\emph{Nash equilibrium} (NE) is used. A Nash equilibrium is a
strategy profile where no selfish player can unilaterally reduce its
individual cost given the strategy choices of the other players. We
can think of Nash equilibria as the stable strategy profiles of
games with selfish players. We will only consider pure Nash
equilibria in this article, i.e., players cannot use random
distributions over their strategies but must decide whether they
want to inoculate or not.

In a pure Nash equilibrium, it must hold for each player $p_i$ that
given a strategy profile $\vec{a}$ $ \forall p_i\in V, ~\forall
a_i:~ c_a(i,\vec{a})\leq c_a(i,(a_{1},\ldots, 1-a_i, \ldots
,a_{n}))$, implying that player $p_i$ cannot decrease its cost by
choosing an alternative strategy $1-a_i$. In order to quantify the
performance loss due to selfishness, the (not necessarily unique)
Nash equilibria are compared to the optimum situation where all
players collaborate. To this end we consider the \emph{Price of
Anarchy} (PoA), i.e., the ratio of the social cost of the worst Nash
equilibrium divided by the optimal social cost for a problem
instance $I$. More formally,
$PoA(I)=\max_{NE}{C_{NE}(I)/C_{OPT}(I)}.$

\subsection{Social Networks}\label{ssec:families}

Our model for social networks is as follows. We define a
\textit{Friendship Factor} $F$ which captures the extent to which
players care about their \emph{friends}, i.e., about the players
\emph{adjacent} to them in the social network. More formally, $F$ is
the factor by which a player $p_i$ takes the individual cost of its
neighbors into account when deciding for a strategy.  $F$ can assume
any value between 0 and 1. $F=0$ implies that the players do not
consider their neighbors' cost at all, whereas $F=1$ implies that a
player values the well-being of its neighbors to the same extent as
its own. Let $\Gamma(p_i)$ denote the set of neighbors of a player
$p_i$. Moreover, let $\Gamma_{sec}(p_i)\subseteq \Gamma(p_i)$ be the
set of inoculated neighbors, and
$\Gamma_{\overline{sec}}(p_i)=\Gamma(p_i)\setminus
\Gamma_{sec}(p_i)$ the remaining insecure neighbors.

We distinguish between a player's \emph{actual cost} and a player's
\emph{perceived cost}. A player's actual individual cost is the
expected cost arising for each player defined in
Definition~\ref{actualcost} used to compute a game's social cost. In
our social network, the decisions of our players are steered by the
players' \emph{perceived cost}.
\begin{defn}[Perceived Individual Cost]\label{perceived cost}$\\$
The \emph{perceived individual cost} of a player $p_i$ is defined as
\begin{displaymath}
c_p(i,\vec{a}) = c_a(i,\vec{a}) + F \cdot \sum_{p_j \in \Gamma(p_i)}
c_a(j,\vec{a}).
\end{displaymath}
In the following, we write $c_p^0(i,\vec{a})$ to denote the
perceived cost of an insecure player $p_i$ and $c_p^1(i,\vec{a})$
for the perceived cost of an inoculated player.
\end{defn}

This definition entails a new notion of equilibrium. We define a
\emph{friendship Nash equilibrium} (FNE) as a strategy profile
$\vec{a}$  where no player can reduce its \emph{perceived} cost by
unilaterally changing its strategy given the strategies of the other
players. Formally, $\forall p_i\in V, \forall a_i:
c_p(i,\vec{a})\leq c_p(i,(a_{1},\ldots, 1-a_i, \ldots ,a_{n})).$
Given this equilibrium concept, we define the \emph{Windfall of
Friendship} $\Upsilon$.
\begin{defn}[Windfall of Friendship (WoF)]\label{def:WFDef}
The \textit{Windfall of Friendship} $\Upsilon(F,I)$ is the ratio of
the social cost of the worst Nash equilibrium for $I$ and the social
cost of the worst friendship Nash equilibrium for $I$:
$$\Upsilon(F,I)=\frac{\max_{NE}{C_{NE}(I)}}{\max_{FNE}{C_{FNE}(F,I)}}$$
\end{defn}

$\Upsilon(F,I) > 1$ implies the existence of a real windfall in the
system, whereas $\Upsilon(F,I) < 1$ denotes that the social cost can
become \emph{greater} in social graphs than in purely selfish
environments.

\section{General Analysis}\label{sec:general}

In this section we characterize friendship Nash equilibria and
derive general results on the Windfall of Friendship for the virus
propagation game in social networks. It has been
shown~\cite{virusgame} that in classic Nash equilibria ($F=0$), an
attack component can never consist of more than $Cn/L$ insecure
players. A similar characteristic also holds for friendship Nash
equilibria. As every player cares about its neighbors, the maximal
attack component size in which an insecure player $p_i$ still does
not inoculate depends on the number of $p_i$'s insecure neighbors
and the size of their attack components. Therefore, it differs from
player to player. We have the following helper lemma.
\begin{lem}\label{lemma:ac-size}
The player $p_i$ will inoculate if and only if the size of its
attack component is
\begin{displaymath}
k_i > \frac{Cn/L + F \cdot \sum_{p_j
\in{\Gamma_{\overline{sec}}(p_i)}}{k_j}}
{1+F|\Gamma_{\overline{sec}}(p_i)|},
\end{displaymath}
where the $k_j$s are the attack component sizes of $p_i$'s insecure
neighbors assuming $p_i$ is secure.
\end{lem}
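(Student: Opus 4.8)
The plan is to compare player $p_i$'s perceived cost under its two available strategies and read off exactly when inoculating is strictly the better choice. By Definition~\ref{perceived cost}, both $c_p^1(i,\vec{a})$ and $c_p^0(i,\vec{a})$ consist of $p_i$'s own actual cost plus $F$ times the sum of the actual costs of its neighbors, and that sum depends on whether $p_i$ is secure. First I would record three facts, all immediate from Definition~\ref{actualcost}: (i) a secure neighbor $p_j \in \Gamma_{sec}(p_i)$ pays $C$ regardless of $p_i$'s choice, so these terms contribute the same quantity $F\,|\Gamma_{sec}(p_i)|\,C$ to $c_p^1$ and to $c_p^0$ and will cancel; (ii) when $p_i$ is secure, an insecure neighbor $p_j \in \Gamma_{\overline{sec}}(p_i)$ lies in an attack component of size $k_j$ and hence pays $L\,k_j/n$; (iii) when $p_i$ is insecure, $p_i$ and \emph{all} of its insecure neighbors lie in one common attack component, which has size $k_i$ by the convention in Definition~\ref{actualcost}, so $p_i$ pays $L\,k_i/n$ and each of its $|\Gamma_{\overline{sec}}(p_i)|$ insecure neighbors also pays $L\,k_i/n$.

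Fact (iii) is the step that needs a short argument and is, I expect, the only genuine obstacle. The observation is that a secure node blocks virus propagation, so the attack component of an insecure neighbor $p_j$ as computed with $p_i$ secure is exactly $p_j$'s component in the subgraph induced by the insecure players with $p_i$ removed; switching $p_i$ to insecure merges $\{p_i\}$ with precisely the union of these neighbor components, since every insecure node reachable from $p_i$ is reached via one of $p_i$'s insecure neighbors, and this union together with $p_i$ is by definition $p_i$'s attack component of size $k_i$. One should be careful that the neighbor components counted by the $k_j$'s may overlap with each other, so in general $k_i \ne 1 + \sum_j k_j$; since the lemma is phrased purely in terms of $k_i$, this causes no harm.

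With (i)--(iii) in hand, $p_i$ strictly prefers to inoculate iff $c_p^1(i,\vec{a}) < c_p^0(i,\vec{a})$, i.e., after cancelling the common secure-neighbor term, iff $C + F \sum_{p_j \in \Gamma_{\overline{sec}}(p_i)} L\,k_j/n < L\,k_i/n + F\,|\Gamma_{\overline{sec}}(p_i)|\,L\,k_i/n$. Multiplying by $n/L > 0$ and then dividing by the positive quantity $1 + F\,|\Gamma_{\overline{sec}}(p_i)|$ — both operations preserving the direction of the inequality — isolates $k_i$ and yields exactly the stated threshold; reading the chain of equivalences backwards handles the converse. (Under the tie-breaking convention that $p_i$ inoculates only when it is \emph{strictly} better off, the biconditional holds as stated with the strict ``$>$''.)
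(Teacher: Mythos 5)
Your proposal is correct and follows essentially the same route as the paper: write out $c_p^1(i,\vec{a})$ and $c_p^0(i,\vec{a})$ from Definition~\ref{perceived cost}, cancel the common secure-neighbor term, and rearrange the strict inequality $c_p^0 > c_p^1$ to isolate $k_i$. Your explicit justification of why all insecure neighbors share $p_i$'s attack component of size $k_i$ when $p_i$ is insecure (and the remark that the $k_j$'s may overlap) is a detail the paper leaves implicit, but it does not change the argument.
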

\begin{proof}
Player $p_i$ will inoculate if and only if this choice lowers the
perceived cost. By Definition~\ref{perceived cost}, the perceived
individual cost of an inoculated player is
\begin{displaymath}
c_p^1(i,\vec{a}) = C + F \left( |\Gamma_{sec}(p_i)|{C} + \sum_{p_j
\in \Gamma_{\overline{sec}}(p_i)}{L \frac{k_j}{n}} \right)
\end{displaymath}
and for an insecure player we have
\begin{displaymath}
c_p^0(i,\vec{a}) = L \frac{k_i}{n} + F \left( |\Gamma_{sec}(p_i)|{C}
+ |\Gamma_{\overline{sec}}(p_i)|{L \frac{k_i}{n}} \right).
\end{displaymath}
For $p_i$ to prefer to inoculate it must hold that
\begin{eqnarray*}
c_p^0(i,\vec{a}) &>& c_p^1(i,\vec{a}) ~~\Leftrightarrow \\
L\frac{k_i}{n} + F \cdot |\Gamma_{\overline{sec}}(p_i)|{L
\frac{k_i}{n}} &>& C + F \cdot \sum_{p_j \in
\Gamma_{\overline{sec}}(p_i)}{L \frac{k_j}{n}} ~~\Leftrightarrow \\
L \frac{k_i}{n} (1 + F |\Gamma_{\overline{sec}}(p_i)|) &>& C +
\frac{FL}{n} \cdot \sum_{p_j \in \Gamma_{\overline{sec}}(p_i)}{k_j}
~~\Leftrightarrow \\
k_i (1 + F |\Gamma_{\overline{sec}}(p_i)|) &>& Cn/L + F \cdot
\sum_{p_j \in \Gamma_{\overline{sec}}(p_i)}{k_j}
~~\Leftrightarrow\\
k_i&>& \frac{Cn/L + F \cdot \sum_{p_j \in
\Gamma_{\overline{sec}}(p_i)}{k_j}}{1 + F
|\Gamma_{\overline{sec}}(p_i)|}.
\end{eqnarray*}
\end{proof}

A pivotal question is of course whether social networks where
players care about their friends yield better equilibria than
selfish environments. The following theorem answers this questions
affirmatively: the worst FNE costs never more than the worst NE.
\begin{thm}\label{FFneversmaller1}
For all instances of the virus inoculation game and $0\leq F\leq 1$,
it holds that
$$
1 \leq \Upsilon(F,I) \leq  \textit{PoA}(I)
$$
\end{thm}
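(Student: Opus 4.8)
The plan is to treat the two inequalities separately. The bound $\Upsilon(F,I)\le \textit{PoA}(I)$ is immediate: a worst FNE is in particular \emph{some} strategy profile, hence its social cost is at least the optimum $C_{OPT}(I)$; replacing the denominator $\max_{FNE}C_{FNE}(F,I)$ by the no-larger quantity $C_{OPT}(I)$ can only increase the ratio. So everything reduces to showing $\max_{NE}C_{NE}(I)\ge \max_{FNE}C_{FNE}(F,I)$, and for this it suffices, given a worst friendship Nash equilibrium $\vec b$, to construct a (selfish) Nash equilibrium $\vec a$ with $C_a(\vec a)\ge C_a(\vec b)$.

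The key observation is that in \emph{every} FNE each insecure player's attack component has size at most $Cn/L$, i.e.\ exactly the threshold that characterises selfish equilibria. Indeed, if $p_i$ is insecure then all of its insecure neighbours lie in the same attack component as $p_i$, so in Lemma~\ref{lemma:ac-size} one has $\sum_{p_j\in\Gamma_{\overline{sec}}(p_i)}k_j=|\Gamma_{\overline{sec}}(p_i)|\cdot k_i$; the friendship terms cancel and the condition for $p_i$ not to inoculate collapses to $k_i\le Cn/L$. Hence an FNE can fail to be a selfish equilibrium only through a \emph{secure} player $p_i$ that, ignoring its friends, would strictly prefer to be insecure — which means the attack component it would join has size $k_i<Cn/L$.

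I would then transform $\vec b$ into an NE greedily: while some secure player selfishly prefers to defect, pick one such player $p_i$ and switch it to insecure. Each step decreases the number of secure players, so the process halts after at most $n$ steps at a profile $\vec a$. Because such a $p_i$ defects only when the component it joins has size $<Cn/L$, and that newly merged component is the only attack component that changes, the invariant ``every insecure component has size at most $Cn/L$'' is maintained throughout; thus no insecure player ever wants to switch. When the process stops no secure player wants to switch either, so $\vec a$ is a selfish Nash equilibrium. (The FNE property together with the standing assumption $C>L/n$ also guarantees that a defecting secure player always has at least one insecure neighbour, so it genuinely merges one or more components rather than forming a singleton.)

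The crux is to show that none of these steps decreases the social cost. If the defecting player $p_i$ joins attack components of sizes $a_1,\dots,a_r$ with total $s=\sum_l a_l$, the social cost changes by $\frac{L}{n}\left[(1+s)^2-\sum_l a_l^2\right]-C$, so I must prove $C\le \frac{L}{n}\left[(1+s)^2-\sum_l a_l^2\right]$. The plan is to feed in the FNE-stability inequality for $p_i$ in $\vec b$ provided by Lemma~\ref{lemma:ac-size} (in its ``does not inoculate'' form), namely $C\le\frac{L}{n}\left[(1+s)(1+Fm)-F\sum_l m_l a_l\right]$, where $m_l$ is the number of $p_i$'s insecure neighbours inside the $l$-th component and $m=\sum_l m_l$. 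Subtracting, the claim reduces to $\sum_l (1+s-a_l)(a_l-Fm_l)\ge 0$, which holds term by term since $1+s-a_l\ge 1$ and, because the $m_l$ neighbours lie inside a component of size $a_l$, $a_l\ge m_l\ge Fm_l$ (using $F\le 1$). I expect this inequality chase — in particular spotting the factorisation into the per-component terms — to be the only real obstacle; the rest is bookkeeping about how attack components merge. Combining the three points gives $\max_{NE}C_{NE}(I)\ge C_a(\vec a)\ge C_a(\vec b)=\max_{FNE}C_{FNE}(F,I)$, hence $\Upsilon(F,I)\ge 1$.
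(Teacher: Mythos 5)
Your proposal is correct and follows essentially the same route as the paper's proof: the upper bound directly from the definitions, and for $\Upsilon(F,I)\ge 1$ the same two-case analysis — insecure FNE-players never selfishly want to inoculate because their components are already below the selfish threshold, and secure players defecting can only raise the social cost, both derived from the stability inequality of Lemma~\ref{lemma:ac-size} and iterated to reach an NE (your per-component factorisation $\sum_l(1+s-a_l)(a_l-Fm_l)\ge 0$ is a slightly cleaner packaging of the paper's algebra, and like the paper you only verify the cost inequality against the original FNE rather than the intermediate profiles). One small slip: in Lemma~\ref{lemma:ac-size} the $k_j$ are component sizes \emph{assuming $p_i$ is secure}, so $\sum_j k_j\le|\Gamma_{\overline{sec}}(p_i)|(k_i-1)$ rather than $=|\Gamma_{\overline{sec}}(p_i)|\cdot k_i$ — but the inequality points the right way, and your conclusion $k_i\le Cn/L$ (the paper gets the stronger $k_i\le Cn/L-F|\Gamma_{\overline{sec}}(p_i)|$) still holds.
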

\begin{proof}
The proof idea for $\Upsilon(F,I) \geq 1$ is the following: for an
instance $I$ we consider an arbitrary FNE with $F>0$. Given this
equilibrium, we show the existence of a NE with larger social cost
(according to~\cite{virusgame}, our best response strategy always
converges). Let $\vec{a}$ be any (e.g., the worst) FNE in the social
model. If $\vec{a}$ is also a NE in the same instance with $F=0$
then we are done. Otherwise there is at least one player $p_i$ that
prefers to change its strategy. Assume $p_i$ is insecure but favors
inoculation. Therefore $p_i$'s attack component has on the one hand
to be of size at least $k_i' > Cn/L$~\cite{virusgame} and on the
other hand of size at most $k_i'' = (Cn/L + F \cdot \sum_{p_j \in
\Gamma_{\overline{sec}}(p_i)}{k_j}) / (1+F
|\Gamma_{\overline{sec}}(p_i)|) \leq (Cn/L + F
|\Gamma_{\overline{sec}}(p_i)| (k_i'' - 1) ) / (1+F
|\Gamma_{\overline{sec}}(p_i)|) ~~\Leftrightarrow k_i'' \leq Cn/L -
F |\Gamma_{\overline{sec}}(p_i)|$ (cf Lemma~\ref{lemma:ac-size}).
This is impossible and yields a contradiction to the assumption that
in the selfish network, an additional player wants to inoculate.

It remains to study the case where $p_i$ is secure in the FNE but
prefers to be insecure in the NE. Observe that, since every player
has the same preference on the attack component's size when $F=0$, a
newly insecure player cannot trigger other players to inoculate.
Furthermore, only the players inside $p_i$'s attack component are
affected by this change. The total cost of this attack component
increases by at least
\begin{eqnarray*}
x &=& \underbrace{\frac{k_i}{n}L - C}_{p_i} + \underbrace{\sum_{p_j \in \Gamma_{\overline{sec}}(p_i)}\left(\frac{k_i}{n}L -\frac{k_j}{n}L\right)}_{\textit{$p_i$'s insecure neighbors}}\\
 &=& \frac{k_i}{n}L - C + \frac{L}{n}
(|\Gamma_{\overline{sec}}(p_i)|k_i - \sum_{p_j \in
\Gamma_{\overline{sec}}(p_i)}{k_j}).
\end{eqnarray*}

Applying Lemma~\ref{lemma:ac-size} guarantees that
$$\sum_{p_j \in
\Gamma_{\overline{sec}}(p_i)} k_j \leq
\frac{k_i(1+F|\Gamma_{\overline{sec}}(p_i)|)-C n/L}{F}.$$ This
results in
\begin{eqnarray*}
x &\geq& \frac{L}{n} \left(|\Gamma_{\overline{sec}}(p_i)|k_i -
\frac{ k_i(1+F|\Gamma_{\overline{sec}}(p_i)|)-C n/L}{F}\right)\\ &=&
\frac{k_i L}{n}(1-\frac{1}{F}) - C (1-\frac{1}{F})>0,
\end{eqnarray*}
since a player only gives up its protection if $C>\frac{k_i L}{n}$.
If more players are unhappy with their situation and become
vulnerable, the cost for the NE increases further. In conclusion,
there exists a NE for every FNE with $F\geq 0$ for the same instance
which is at least as expensive.

The upper bound for the WoF, i.e., $\textit{PoA($I$)} \geq
\Upsilon(F,I)$, follows directly from the definitions: while the PoA
is the ratio of the NE's social cost divided by the social optimum,
$\Upsilon(F,I)$ is the ratio between the cost of the NE and the FNE.
As the FNE's cost must be at least as large as the social optimum
cost the claim follows. \hfill\end{proof}

\begin{remark}
Note that Aspnes et al.~\cite{virusgame} proved that the Price of
Anarchy never exceeds the size of the network, i.e., $n \geq
\textit{PoA($I$)}$. Consequently, the Windfall of Friendship cannot
be larger than $n$ due to Theorem 4.2.
\end{remark}

The above result leads to the question of whether the Windfall of
Friendship grows monotonically with stronger social ties, i.e., with
larger friendship factors $F$. Intriguingly, this is not the case.
\begin{thm}\label{thm:monotone}
For all networks with more than three players, there exist game
instances where $\Upsilon(F,I)$ does not grow monotonically in $F$.
\end{thm}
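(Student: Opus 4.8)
\noindent\textit{Proof proposal.} The statement asks for an instance in which $\Upsilon(F,I)=\max_{NE}C_{NE}(I)\,/\,\max_{FNE}C_{FNE}(F,I)$ is not monotone in $F$. Since the numerator does not depend on $F$ and equals the worst FNE at $F=0$ (FNE at $F=0$ coincide with NE), it suffices to exhibit a graph and costs for which the worst FNE first gets \emph{cheaper} and then again \emph{more expensive} as $F$ grows; because $\Upsilon\ge 1$ always by Theorem~\ref{FFneversmaller1}, $\Upsilon$ then necessarily goes up and later comes back down. My plan is to use a \emph{star}. For $n\ge 4$ players take $G=K_{1,n-1}$ (one center, $n-1$ leaves) and choose $C$ so that $c:=Cn/L$ lies strictly between $3$ and $4$; the concrete choice $C=\frac{7L}{2n}$ works and respects $L/n<C\le L$ exactly when $n\ge 4$.

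First I would classify the FNE. Up to the symmetry of the leaves, a profile on a star is determined by whether the center is inoculated and by the number of inoculated leaves; when the center is insecure its attack component has some size $m\ge 1$ and exactly $m-1$ leaves are insecure — call this profile $\sigma_m$. Applying Lemma~\ref{lemma:ac-size} to the center and comparing perceived costs directly for a secure and for an insecure leaf yields, for each profile, a short pair of linear-in-$F$ inequalities. I expect to obtain: ``center secure, all leaves insecure'' is an FNE for every $F$ (social cost $\frac{L}{n}(c+n-1)$); for $m\ge 2$, $\sigma_m$ is an FNE iff $m+F(m-1)^2<c<m+1+F$ (social cost $\frac{L}{n}((n-m)c+m^2)$); and the remaining profiles (all-secure, all-insecure, $\sigma_1$) are unstable for all $F\in[0,1]$ under our choice of $c$. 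For $c\in(3,4)$ only three profiles survive: $\sigma_3$, $\sigma_2$ and ``center secure''. Substituting $c=7/2$, $\sigma_3$ is an FNE exactly for $F<1/8$ and $\sigma_2$ exactly for $F>1/2$, and no $\sigma_m$ with $m\ge 4$ ever qualifies (its left inequality would force $c\ge 4$); hence on the interval $(1/8,1/2)$ the unique FNE is ``center secure''.

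The three regimes then give the non-monotonicity. Comparing the three cost expressions one gets $(\text{cost of center-secure})<(\text{cost of }\sigma_2)<(\text{cost of }\sigma_3)$ (each inequality is an easy consequence of $1<c<4$), so the worst NE is $\sigma_3$. For $F<1/8$ the worst FNE is again $\sigma_3$, so $\Upsilon=1$. For $F\in(1/8,1/2)$ the only FNE is ``center secure'', so $\Upsilon=\frac{(n-3)c+9}{c+n-1}=\frac{7n-3}{2n+5}>1$. For $F\in(1/2,1]$ the worst FNE is $\sigma_2$, so $\Upsilon=\frac{(n-3)c+9}{(n-2)c+4}=\frac{7n-3}{7n-6}$, which lies strictly between $1$ and the middle value $\frac{7n-3}{2n+5}$ (the comparison $\frac{7n-3}{7n-6}<\frac{7n-3}{2n+5}$ is just $2n+5<7n-6$, true for $n\ge 4$). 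Thus $\Upsilon$ equals $1$, then jumps up, then drops strictly below the middle value: it is not monotone in $F$; the restriction to $n\ge 4$ enters only through $C\le L$.

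The step I expect to be the actual work is the exhaustive FNE classification on the star, and in particular handling the middle regime: I must rule out \emph{every} other profile as an FNE on $(1/8,1/2)$, since overlooking a more expensive equilibrium there would lower the reported windfall and could destroy the up‑then‑down shape. A secondary chore is verifying that the three thresholds ($0$, $1/8$, $1/2$) are ordered as claimed and that the chosen $c\in(3,4)$ keeps all regime boundaries nonempty while keeping $C$ inside $(L/n,L]$ simultaneously for every $n\ge 4$; taking $\sigma_3$ as the expensive profile makes these checks routine, but they still have to be carried out.
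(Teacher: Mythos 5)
Your proposal is correct and follows essentially the same route as the paper: a star-graph instance with $Cn/L$ between $3$ and $4$, where a small positive $F$ kills every equilibrium with an insecure center (leaving only the cheap center-secure profile) while a larger $F$ re-admits the more expensive profile with the center and one leaf insecure, so the worst FNE first drops and then rises again. Your write-up is somewhat more explicit than the paper's (a full classification of all star FNE and the three $F$-regimes, versus the paper's two hand-picked factors $F_s=1/8$, $F_l=3/4$ with $C=5L/(2n)+F_lL/n$), but the underlying counterexample and mechanism are the same.
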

\begin{proof}
We give a counter example for the star graph $S_n$ which has one
center player and $n-1$ leaf players. Consider two friendship
factors, $F_l$ and $F_s$ where $F_l>F_s$. We show that for the large
friendship factor $F_l$, there exists a FNE, $FNE_1$, where only the
center player and one leaf player remain insecure. For the same
setting but with a small friendship factor $F_s$, at least two leaf
players will remain insecure, which will trigger the center player
to inoculate, yielding a FNE, $FNE_2$, where only the center player
is secure.

Consider $FNE_1$ first. Let $c$ be the insecure center player, let
$l_{1}$ be the insecure leaf player, and let $l_{2}$ be a secure
leaf player. In order for $FNE_1$ to constitute a Nash equilibrium,
the following conditions must hold:
$$
\textit{player } c: ~~ \frac{2L}{n}+\frac{2F_lL}{n} < C +
\frac{F_lL}{n}
$$
$$
\textit{player } l_{1}: ~~ \frac{2L}{n}+\frac{2F_lL}{n} < C +
\frac{F_lL}{n}
$$
$$
\textit{player } l_{2}: ~~ C+\frac{2F_lL}{n} < \frac{3L}{n} +
\frac{3F_lL}{n}
$$

For $FNE_2$, let $c$ be the insecure center player, let $l_{1}$ be
one of the two insecure leaf players, and let $l_{2}$ be a secure
leaf player. In order for the leaf players to be happy with their
situation but for the center player to prefer to inoculate, it must
hold that:
$$
\textit{player } c: ~~ C+\frac{2F_sL}{n} < \frac{3L}{n} +
\frac{6F_sL}{n}
$$
$$
\textit{player } l_{1}: ~~ \frac{3L}{n}+\frac{3F_sL}{n} < C +
\frac{2F_sL}{n}
$$
$$
\textit{player } l_{2}: ~~ C+\frac{3F_sL}{n} < \frac{4L}{n} +
\frac{4F_sL}{n}
$$

Now choose $C:=5L/(2n)+F_lL/n$ (note that due to our assumption that
$n>3$, $C<L$). This yields the following conditions: $F_l>F_s+1/2$,
$F_l<F_s+3/2$, and $F_l<4F_s+1/2$. These conditions are easily
fulfilled, e.g., with $F_l=3/4$ and $F_s=1/8$. Observe that the
social cost of the first FNE (for $F_l$) is
$Cost(S_n,\vec{a}_{FNE1})=(n-2)C+4L/n$, whereas for the second FNE
(for $F_s$) $Cost(S_n,\vec{a}_{FNE2})=C+(n-1)L/n$. Thus,
$Cost(S_n,\vec{a}_{FNE1})-Cost(S_n,\vec{a}_{FNE2})=(n-3)C-(n-5)L/n>0$
as we have chosen $C>5L/(2n)$ and as, due to our assumption, $n>3$.
This concludes the proof. \hfill\end{proof}

Reasoning about best and worst Nash equilibria raises the question
of how difficult it is to compute such equlibria. We can generalize
the proof given in~\cite{virusgame} and show that computing the most
economical and the most expensive FNE is hard for any friendship
factor.
\begin{thm}\label{NP-completeness}
Computing the best and the worst pure FNE is $\mathcal{NP}$-complete
for any $F\in [0,1]$.
\end{thm}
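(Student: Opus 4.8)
The plan is to establish membership in $\mathcal{NP}$ and $\mathcal{NP}$-hardness separately. Membership is easy: for a given strategy profile $\vec{a}$ one computes all attack-component sizes $k_i$ in polynomial time --- the connected components of the subgraph induced by the insecure players, together with, for each secure player, the component it would join if it flipped --- and hence evaluates $c_p(i,\vec{a})$ and the unique alternative $c_p(i,(a_{1},\ldots,1-a_i,\ldots,a_{n}))$ for every player via Definition~\ref{perceived cost}. Thus ``$\vec{a}$ is an FNE and $C_a(\vec{a})\le t$'' (for the best FNE) and ``$\vec{a}$ is an FNE and $C_a(\vec{a})\ge t$'' (for the worst FNE) are polynomial-time checkable certificates, so both decision problems lie in $\mathcal{NP}$.

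For hardness, note first that when $F=0$ the perceived cost coincides with the actual cost, an FNE is an ordinary NE, and the claim is exactly the hardness result of~\cite{virusgame} --- a reduction from a problem related to \emph{vertex cover} (sum-of-squares partition) in which $C$ and $L$ are calibrated so that an insecure player strictly prefers to inoculate precisely when its attack component would exceed $Cn/L$. The work is to carry this over to an arbitrary fixed $F\in(0,1]$. The only ingredient that changes is the inoculation threshold: by Lemma~\ref{lemma:ac-size}, $p_i$ inoculates iff $k_i>(Cn/L+F\sum_{p_j\in\Gamma_{\overline{sec}}(p_i)}k_j)/(1+F|\Gamma_{\overline{sec}}(p_i)|)$, which is no longer the uniform value $Cn/L$; it depends on how many insecure neighbours $p_i$ has and on the (endogenous) sizes of their attack components. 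The key observation is that this dependence is harmless in two regimes: if $p_i$ has \emph{no} insecure neighbour among the ``decision'' players, the $F$-sum vanishes and the threshold is again $Cn/L$; and if the insecure decision neighbours of $p_i$ are \emph{symmetric} to it, each having an attack component of the same size $k_i$, the $F$-terms cancel and the threshold is \emph{still} $Cn/L$. Hence I would modify the reduction so that every decision player sits inside an identical local gadget wired symmetrically to the gadgets of its neighbours --- subdividing edges where necessary and padding with helper nodes whose equilibrium state is forced to secure --- so that in every FNE an insecure decision player is either isolated among decision players or in a symmetric configuration. Then the whole equilibrium analysis collapses to the $F=0$ case after recalibrating $C$, and the FNE of the constructed instance correspond to the vertex covers of the input graph exactly as in~\cite{virusgame}, the padding contributing only a fixed additive constant to the social cost. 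Consequently ``there is an FNE of cost $\le t$'' decides the existence of a small vertex cover and ``there is an FNE of cost $\ge t$'' that of a large one (equivalently a small independent set), so both problems are $\mathcal{NP}$-hard, and with membership $\mathcal{NP}$-complete, for every $F\in[0,1]$.

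I expect the real effort to go into the gadget construction: one must verify that the helper/padding structure admits \emph{no} alternative equilibrium state --- a spurious FNE would break the cost correspondence --- that the ``isolated or symmetric'' dichotomy for insecure decision players holds in \emph{every} FNE rather than only the intended ones, and that the recalibrated parameters still respect the standing assumption $L/n<C\le L$ across the whole range of $F$. These are finite parameter-counting arguments, but they are delicate because $F$ multiplies attack-component sizes that can be as large as $\Theta(n)$, so only tight estimates will do.
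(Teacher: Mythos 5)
Your $\mathcal{NP}$-membership argument is fine, but the hardness half of your proposal has a genuine gap: the entire content of the proof is delegated to a gadget construction that you describe only in outline and explicitly defer (``I expect the real effort to go into the gadget construction''). The step that would need to be verified --- that \emph{every} FNE of the padded instance puts each insecure decision player into the ``isolated or symmetric'' configuration, and that the helper nodes admit no spurious equilibria --- is precisely the hard part, because equilibria are global objects and the $F$-term in Lemma~\ref{lemma:ac-size} couples a player's threshold to the (endogenous) component sizes of its neighbours. As written, the reduction is a plan rather than a proof. A smaller inaccuracy: the sum-of-squares partition reduction in~\cite{virusgame} establishes hardness of the social \emph{optimum}; the best/worst-equilibrium hardness there is via \textsc{Vertex Cover} and \textsc{Independent Dominating Set}, which is also what is needed here (in particular, the worst-FNE direction requires the dominating-set side, which your sketch only mentions in passing).

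The paper sidesteps the gadget machinery entirely by a parameter choice: set $C=1$ and $L=n/1.5$, so that $Cn/L=1.5<2$. Then for any $F\in[0,1]$, an insecure player in an attack component of size $k_i\geq 2$ saves at least $\frac{k_i}{n}L - C + F|\Gamma_{\overline{sec}}(p_i)|\frac{L}{n} > 0$ by inoculating (the friendship term only \emph{adds} to the incentive, since securing oneself also shrinks the neighbours' components), so every attack component in every FNE is a singleton; and since $C>L/n$, every secure player must have an insecure neighbour. Hence the secure sets of FNEs are exactly the minimal vertex covers (equivalently, the insecure sets are exactly the independent dominating sets), uniformly in $F$, and the best/worst FNE decision problems inherit hardness from \textsc{Vertex Cover} and \textsc{Independent Dominating Set} with no gadgets and no recalibration. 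Your ``symmetric neighbourhood'' observation is a reasonable instinct, but the degenerate regime $Cn/L<2$ makes it unnecessary; if you want to salvage your route, you would need to actually exhibit the gadgets and prove the no-spurious-equilibria claim, which is substantially more work than the one-line parameter choice.
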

\begin{proof}
We prove this theorem by a reduction from two $\mathcal{NP}$-hard
problems, \textsc{Vertex Cover}~\cite{vertexcover} and
\textsc{Independent Dominating Set}~\cite{garey79}. Concretely, for
the decision version of the problem, we show that answering the
question whether there exists a FNE costing less than $k$, or more
than $k$ respectively, is at least as hard as solving vertex cover
or independent dominating set. Note that verifying whether a
proposed solution is correct can be done in polynomial time, hence
the problems are indeed in $\mathcal{NP}$.

Fix some graph $G=(V,E)$ and set $C=1$ and $L=n/1.5$. We show that
the following two conditions are necessary and sufficient for a FNE:
(a) all neighbors of an insecure player are secure, and (b) every
inoculated player has at least one insecure neighbor. Due to our
assumption that $C>L/n$, condition (b) is satisfied in all FNE. To
see that condition (a) holds as well, assume the contrary, i.e., an
attack component of size at least two. An insecure player $p_i$ in
this attack component bears the cost $\frac{k_i}{n} L +
F(|\Gamma_{sec}(p_i)| C + |\Gamma_{\overline{sec}}(p_i)|
\frac{k_i}{n} L)$. Changing its strategy reduces its cost by at
least $\Delta_{i} = \frac{k_i}{n} L + F
|\Gamma_{\overline{sec}}(p_i)| \frac{k_i}{n} L - C - F
|\Gamma_{\overline{sec}}(p_i)| \frac{k_i-1}{n} L = \frac{k_i}{n} L +
F |\Gamma_{\overline{sec}}(p_i)| \frac{1}{n} L - C$. By our
assumption that $k_i \geq 2$, and hence
$|\Gamma_{\overline{sec}}(p_i)| \geq 1$, it holds that $\Delta_{i} >
0$, resulting in $p_i$ becoming secure. Hence, condition (a) holds
in any FNE as well. For the opposite direction assume that an
insecure player wants to change its strategy even though (a) and (b)
are true. This is impossible because in this case (b) would be
violated because this player does not have any insecure neighbors.
An inoculated player would destroy (a) by adopting another strategy.
Thus (a) and (b) are sufficient for a FNE.

We now argue that $G$ has a vertex cover of size $k$ if and only if
the virus game has a FNE with $k$ or fewer secure players, or
equivalently an equilibrium with social cost at most $Ck +
(n-k)L/n$, as each insecure player must be in a component of size 1
and contributes exactly $L/n$ expected cost. Given a minimal vertex
cover $V' \subseteq V$, observe that installing the software on all
players in $V'$ satisfies condition (a) because $V'$ is a vertex
cover and (b) because $V'$ is minimal. Conversely, if $V'$ is the
set of secure players in a FNE, then $V'$ is a vertex cover by
condition (a) which is minimal by condition (b).

For the worst FNE, we consider an instance of the independent
dominating set problem. Given an independent dominating set $V'$,
installing the software on all players except the players in $V'$
satisfies condition (a) because $V'$ is independent and (b) because
$V'$ is a dominating set. Conversely, the insecure players in any
FNE are independent by condition (a) and dominating by condition
(b). This shows that $G$ has an independent dominating set of size
at most $k$ if and only if it has a FNE with at least $n-k$ secure
players.\end{proof}

%=========================================================================
% cliquestar
%=========================================================================
\section{Windfall for Special Graphs}\label{sec:cliquestar}

While the last section has presented general results on equilibria
in social networks and the Windfall of Friendship, we now present
upper and lower bounds on the Windfall of Friendship for concrete
topologies, namely the \emph{complete graph} $K_n$ and the
\emph{star graph} $S_n$.

\subsection{Complete Graphs}

In order to initiate the study of the Windfall of Friendship, we
consider a very simple topology, the complete graph $K_n$ where all
players are connected to each other. First consider the classic
setting where players do not care about their neighbors ($F=0$). We
have the following result:

\begin{lem}\label{cliqueNE}
In the graph $K_n$, there are two Nash equilibria with social cost
\begin{footnotesize}
\begin{eqnarray*}
\textit{~~NE$_1$: } Cost(K_n,\vec{a}_{\textit{NE1}}) &=& C(n-\lceil Cn/L \rceil + 1) + L/n(\lceil Cn/L \rceil - 1 )^2,\\ \textrm{and~~~~~~~~~~~~~~~~~~~~~~~~~~}\\
\textit{~~NE$_2$: }
 Cost(K_n,\vec{a}_{\textit{NE2}}) &=& C(n-\lfloor Cn/L \rfloor) + L/n(\lfloor Cn/L \rfloor)^2.
\end{eqnarray*}
\end{footnotesize}
If $\lceil Cn/L  \rceil- 1 = \lfloor Cn/L
\rfloor$, there is only one Nash equilibrium.
\end{lem}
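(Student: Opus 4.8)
The plan is to exploit the complete symmetry of $K_n$. Since every two players are adjacent, all insecure players lie in a single attack component, so a strategy profile is, up to relabelling players, determined by the number $m$ of insecure players, and its social cost is $Cost = (n-m)C + L\,m^2/n$ (each of the $m$ insecure players sees component size $m$ and contributes $L\,m/n$). Thus ``there are two Nash equilibria'' should be read as ``there are (at most) two values of $m$ that yield a Nash equilibrium, hence at most two distinct equilibrium social costs''; I would state this normalization first.

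Next I would translate the equilibrium condition into a condition on $m$. Applying Lemma~\ref{lemma:ac-size} with $F=0$ (or arguing directly): an insecure player with component size $k_i=m$ strictly prefers to inoculate iff $m > Cn/L$; a secure player, were it to drop its protection, would enlarge the single component to $m+1$, and strictly prefers to do so iff $C > L(m+1)/n$, i.e.\ iff $m < Cn/L - 1$. Crucially, in $K_n$ with $F=0$ a unilateral deviation only moves the deviating player in or out of the one attack component and changes no other player's incentive (no cascade), so these two inequalities are not merely necessary but also sufficient: a profile with $m$ insecure players is a Nash equilibrium \emph{if and only if} $Cn/L - 1 \le m \le Cn/L$.

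It remains to count the integers in the closed interval $[Cn/L-1,\,Cn/L]$. This interval has length $1$, so it contains at most two integers; the largest integer it contains is $\lfloor Cn/L\rfloor$, and the smallest is $\lceil Cn/L - 1\rceil = \lceil Cn/L\rceil - 1$, and one checks that both indeed lie in the interval. These two values coincide precisely when $Cn/L\notin\mathbb{Z}$, which is exactly the stated condition $\lceil Cn/L\rceil - 1 = \lfloor Cn/L\rfloor$ (the ``only one equilibrium'' case); otherwise the two admissible values are $m_1=\lceil Cn/L\rceil-1$ and $m_2=\lfloor Cn/L\rfloor$. Substituting $m_1$ and $m_2$ into $(n-m)C + L\,m^2/n$ gives exactly the displayed expressions for $Cost(K_n,\vec{a}_{\textit{NE1}})$ and $Cost(K_n,\vec{a}_{\textit{NE2}})$.

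I do not expect a deep obstacle here; the two things to handle carefully are (i) the floor/ceiling bookkeeping — in particular the identity $\lceil Cn/L - 1\rceil = \lceil Cn/L\rceil - 1$ and the fact that a unit-length closed interval contains two integers exactly when its endpoints are integers — and (ii) checking that the standing parameter restriction $L/n < C \le L$ makes these profiles legitimate, i.e.\ that the resulting $m$ satisfies $1 \le m \le n$ with the required mix of secure and insecure players, so that the claimed equilibria genuinely exist. I would also make explicit the ``no cascade'' remark, since that is what turns the inequality bounds into an exact characterization of the Nash equilibria.
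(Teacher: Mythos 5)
Your proposal is correct and follows essentially the same route as the paper: both reduce a profile in $K_n$ to the number of insecure players $m$, derive $m \geq \lceil Cn/L-1\rceil$ from the secure players' condition and $m \leq \lfloor Cn/L\rfloor$ from the insecure players' condition, and substitute into $(n-m)C + Lm^2/n$. Your additional remarks (the no-cascade observation making the bounds sufficient, the integer-counting in the unit interval, and the existence check against $L/n < C \leq L$) are details the paper leaves implicit, but they do not change the argument.
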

\begin{proof}
Let $\vec{a}$ be a NE. Consider an inoculated player $p_i$ and an
insecure player $p_j$, and hence $c_a(i,\vec{a}) = C$ and
$c_a(j,\vec{a}) = L \frac{k_j}{n}$, where $k_j$ is the total number
of insecure players in $K_n$. In order for $p_i$ to remain
inoculated, it must hold that $C \leq (k_j+1)L/n$, so $k_j \geq
\lceil Cn/L-1 \rceil$; for $p_j$ to remain insecure, it holds that
$k_j L/n \leq C$, so $k_j \leq \lfloor Cn/L \rfloor$. As the total
social cost in $K_n$ is given by $Cost(K_n,\vec{a}) = (n-k_j)C +
k_j^2 L/n$, the claim follows. \hfill\end{proof}

Observe that the equilibrium size of the attack component is roughly
twice the size of the attack component of the social optimum, as
$Cost(K_n,\vec{a}) = (n-k_j)C + k_j^2 L/n$ is minimized for $k_j =
Cn/2L$.
\begin{lem}\label{cliqueopt}
In the social optimum for $K_n$, the size of the attack component is
either $\lfloor \frac{1}{2} Cn/L \rfloor$ or $\lceil \frac{1}{2}
Cn/L \rceil$, yielding a total social cost of
$$Cost(K_n,\vec{a}_{\textit{OPT}}) = (n-\lfloor \frac{1}{2} Cn/L
\rfloor)C + (\lfloor \frac{1}{2} Cn/L \rfloor)^2 \frac{L}{n}$$ or
$$Cost(K_n,\vec{a}_{\textit{OPT}}) = (n-\lceil \frac{1}{2} Cn/L
\rceil)C + (\lceil \frac{1}{2} Cn/L \rceil)^2 \frac{L}{n}.$$
\end{lem}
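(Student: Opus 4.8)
The plan is to use the symmetry of $K_n$ to collapse the optimization onto a single integer variable — the number $k$ of insecure players — and then minimize a convex quadratic, invoking the real minimizer $k^\star = \tfrac{1}{2}Cn/L$ already identified in the remark preceding the lemma.

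First I would observe that in $K_n$ any two insecure players are adjacent, so all insecure players lie in one common attack component. Hence if a strategy profile leaves exactly $k$ players insecure, each such player has attack component size $k$ and contributes $Lk/n$ to the social cost, while each of the $n-k$ secure players contributes $C$. Therefore the social cost of \emph{every} profile with $k$ insecure players equals $f(k):=(n-k)C + k^2 L/n$, independent of which players are insecure, and the social optimum amounts to minimizing $f$ over $k\in\{0,1,\dots,n\}$.

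Next I would note that $f(k)=\tfrac{L}{n}k^2 - Ck + nC$ is a convex quadratic in $k$ with unconstrained real minimizer $k^\star=\tfrac{Cn}{2L}$. Since $L/n < C \le L$, we have $0 < k^\star \le n/2 < n$, so both integers $\lfloor k^\star\rfloor$ and $\lceil k^\star\rceil$ lie in $\{0,\dots,n\}$ and the box constraint is not active at the minimizer. By convexity of $f$ on the integers, the minimum over $\{0,\dots,n\}$ is attained at one of these two flanking integers, i.e. at $k=\lfloor \tfrac{1}{2}Cn/L\rfloor$ or $k=\lceil \tfrac{1}{2}Cn/L\rceil$ (the two coincide when $Cn/(2L)$ is integral). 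Substituting these values of $k$ into $f$ yields exactly the two claimed expressions for $Cost(K_n,\vec{a}_{\textit{OPT}})$.

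There is no genuinely hard step: the crux is the reduction-by-symmetry observation together with the standard fact that a convex function on the integers is minimized at $\lfloor k^\star\rfloor$ or $\lceil k^\star\rceil$. The only point requiring a little care is the feasibility check that both flanking integers are legal profiles, which is precisely where the parameter restriction $L/n < C \le L$ is used; the final substitution into $f$ is a routine computation.
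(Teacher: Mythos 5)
Your proof is correct and follows essentially the same route the paper takes: the paper states no separate proof for this lemma, relying only on the preceding remark that $Cost(K_n,\vec{a}) = (n-k)C + k^2L/n$ is minimized at $k = Cn/(2L)$, which is exactly the convex-quadratic argument you spell out. Your version is simply a more careful write-up, adding the symmetry reduction, the integer-rounding step, and the feasibility check $0 < Cn/(2L) \le n/2$ from the standing assumption $L/n < C \le L$, all of which the paper leaves implicit.
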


In order to compute the Windfall of Friendship, the friendship Nash
equilibria in social networks have to be identified.
\begin{lem}\label{cliqueFNE}
In $K_n$, there are two friendship Nash equilibria with social cost
\begin{footnotesize}
\begin{eqnarray*}
\textit{FNE$_1$: } Cost(K_n,\vec{a}_{\textit{FNE1}}) &=& C
\left(n-\left\lceil\frac{Cn/L - 1}{1 + F}\right\rceil\right) +
L/n\left(\left\lceil\frac{Cn/L - 1}{1 +
F}\right\rceil\right)^2, \\
\textrm{and~~~~~~~~~~~~~~~~~~~~~~~~~~~~~~~}\\
\textit{FNE$_2$: } Cost(K_n,\vec{a}_{\textit{FNE2}}) &=&
C\left(n-\left\lfloor\frac{Cn/L + F}{1 + F}\right\rfloor\right) +
L/n\left(\left\lfloor\frac{Cn/L + F}{1 + F}\right\rfloor\right)^2.
\end{eqnarray*}
\end{footnotesize}
If $\lceil(Cn/L - 1)/(1 + F)\rceil = \lfloor(Cn/L
+ F)/(1 + F)\rfloor$, there is only one FNE.
\end{lem}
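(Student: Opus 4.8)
The plan is to reduce everything to a single parameter. In $K_n$ every player is adjacent to every other, so the set of insecure players is always a single attack component, and a strategy profile $\vec{a}$ is determined, as far as cost and stability are concerned, by the number $k$ of insecure players: the social cost is $Cost(K_n,\vec{a}) = (n-k)C + k^2 L/n$, and by the symmetry of $K_n$ all insecure players face the same situation and all secure players face the same situation. Hence $\vec{a}$ is an FNE if and only if both types are individually content, which is a condition on $k$ alone; it then suffices to identify the admissible values of $k$ and substitute them into the cost formula.

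First I would apply Lemma~\ref{lemma:ac-size} to an insecure player $p_i$. Its attack component has size $k_i=k$; if $p_i$ switched to secure, the remaining $k-1$ insecure players would form one component, so $|\Gamma_{\overline{sec}}(p_i)| = k-1$ and each corresponding $k_j$ equals $k-1$, giving $\sum k_j = (k-1)^2$. By the lemma $p_i$ strictly prefers to inoculate iff $k > (Cn/L + F(k-1)^2)/(1 + F(k-1))$; clearing the denominator, the $F(k-1)^2$ terms cancel and this collapses to $k(1+F) > Cn/L + F$, so $p_i$ is content to stay insecure iff $k \le \lfloor (Cn/L+F)/(1+F)\rfloor$. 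Symmetrically, for a secure player $p_i$ the component it would join has size $k_i = k+1$, it has $|\Gamma_{\overline{sec}}(p_i)| = k$ insecure neighbors, and $\sum k_j = k^2$; the condition of Lemma~\ref{lemma:ac-size} that it weakly prefers to remain inoculated, $k+1 \ge (Cn/L + Fk^2)/(1+Fk)$, again loses its quadratic term and becomes $k(1+F) \ge Cn/L - 1$, i.e. $k \ge \lceil (Cn/L-1)/(1+F)\rceil$.

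Therefore $\vec{a}$ is an FNE exactly when $\lceil (Cn/L-1)/(1+F)\rceil \le k \le \lfloor (Cn/L+F)/(1+F)\rfloor$. The crucial observation is that $(Cn/L+F)/(1+F) - (Cn/L-1)/(1+F) = (F+1)/(1+F) = 1$, so this is a closed real interval of length exactly one; it always contains at least one integer and at most two. Substituting $k = \lceil(Cn/L-1)/(1+F)\rceil$ and $k = \lfloor(Cn/L+F)/(1+F)\rfloor$ into $(n-k)C + k^2 L/n$ yields exactly the two claimed costs $Cost(K_n,\vec{a}_{FNE1})$ and $Cost(K_n,\vec{a}_{FNE2})$; and when the ceiling and floor coincide there is a unique admissible $k$, hence a single FNE cost. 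Finally I would note that the admissible $k$ are nontrivial: $C > L/n$ gives $Cn/L > 1$, hence $\lceil(Cn/L-1)/(1+F)\rceil \ge 1$ so the all-secure profile $k=0$ is excluded, and $C \le L$ bounds the upper endpoint by roughly $n$, so the admissible profiles are genuine equilibria and at least one FNE always exists.

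I would flag two points as the heart of the argument: the algebraic cancellation of the $F(k-1)^2$ and $Fk^2$ terms, which is what makes both happiness conditions linear in $k$, and the length-one interval computation, which is exactly what caps the number of FNE at two, mirroring the classical $F=0$ case of Lemma~\ref{cliqueNE}. The only remaining care is bookkeeping of strict versus non-strict inequalities at the two endpoints (an insecure player is content with equality, a secure player is content with equality on the other side) and confirming the trivial profiles $k=0$ and $k=n$ do not arise.
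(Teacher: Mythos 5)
Your proof is correct and follows essentially the same route as the paper's: both apply Lemma~\ref{lemma:ac-size} to a generic secure and a generic insecure player of $K_n$, exploit the cancellation of the quadratic terms to get the linear bounds $\lceil(Cn/L-1)/(1+F)\rceil \le k \le \lfloor(Cn/L+F)/(1+F)\rfloor$, and observe that this interval has length one, so it admits at most two integer values of $k$, each substituted into $(n-k)C + k^2L/n$. Your write-up is somewhat more explicit than the paper's (spelling out the algebraic cancellation, the endpoint tie-breaking, and the exclusion of $k=0$ and $k=n$), but the argument is the same.
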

\begin{proof}
According to Lemma~\ref{lemma:ac-size}, in a FNE, a player $p_i$
remains secure if otherwise the component had size at least $k_i =
k_j + 1 \geq (Cn/L + Fk_j^2)/(1+Fk_j)$ where $k_j$ is the number of
insecure players. This implies that $k_j \geq \lceil(Cn/L - 1)/(1 +
F)\rceil$. Dually, for an insecure player $p_j$ it holds that $k_j
\leq (Cn/L + F(k_j-1)^2)/(1+F(k_j-1))$ and therefore $k_j \leq
\lfloor(Cn/L + F)/(1 + F)\rfloor$. Given these bounds on the total
number of insecure players in a FNE, the social cost can be obtained
by substituting $k_j$ in $Cost(K_n,\vec{a}) = (n-k_j)C + k_j^2 L/n$.
As the difference between the upper and the lower bound for $k_j$ is
at most 1, there are at most two equilibria and the claim follows.
\hfill\end{proof}

Given the characteristics of the different equilibria, we have the
following theorem.
\begin{thm}\label{4/3}
In $K_n$, the Windfall of Friendship is at most $\Upsilon(F,I) =
4/3$ for an arbitrary network size. This is tight in the sense that
there are indeed instances where the worst FNE is a factor $4/3$
better than the worst NE.
\end{thm}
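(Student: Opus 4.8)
The plan is to bound the ratio $\max_{NE} C_{NE}/\max_{FNE} C_{FNE}$ directly using the closed forms from Lemmas \ref{cliqueNE} and \ref{cliqueFNE}. Write $m := Cn/L$, so $1 < m \le n$ by our parameter assumptions. The worst NE has an attack component of size roughly $m$ (precisely $\lceil m \rceil - 1$ or $\lfloor m \rfloor$), while the worst FNE has an attack component of size roughly $m/(1+F)$; in both cases the cost as a function of the component size $k$ is $g(k) := (n-k)C + k^2 L/n$. First I would observe that $g$ is a convex parabola minimized at $k^* = Cn/(2L) = m/2$, decreasing for $k < m/2$ and increasing for $k > m/2$, and that the relevant NE component size $k_{NE}$ always satisfies $k_{NE} \ge m/2$ (indeed $k_{NE} \ge \lceil m-1\rceil \ge m/2$ since $m>1$ forces... — more carefully, one checks $\lfloor m \rfloor \ge m/2$ and $\lceil m\rceil - 1 \ge m/2$ for $m>1$), so the worst NE is the one with the larger component, and similarly the worst FNE is the one with the larger of its two candidate sizes. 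The key structural point is that the FNE component size lies in roughly $[m/(1+F), \text{something} \le m]$, so it is never larger than the NE size, which re-confirms $\Upsilon \ge 1$, and the ratio is largest when $F$ pushes the FNE component all the way down toward $m/2$ (the social optimum), because then $g(k_{FNE})$ is minimized.

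The main computation is then to maximize $g(k_{NE})/g(k_{FNE})$ over the admissible range. I would argue that the supremum is approached in the regime where $k_{NE} \approx m$ (so $g(k_{NE}) \approx (n-m)C + m^2 L/n = nC - mC + mC = nC$, using $m L/n = C$) while $k_{FNE} \approx m/2$ (so $g(k_{FNE}) \approx (n - m/2)C + (m/2)^2 L/n = nC - mC/2 + mC/4 = nC - mC/4 = (n - m/4)C$). Taking the ratio gives $nC / \big((n - m/4)C\big) = n/(n - m/4)$, which is maximized by taking $m$ as large as possible, i.e. $m \to n$, giving $n/(n - n/4) = 4/3$. I would then need to check that the floors/ceilings and the exact (rather than approximate) values of $k_{NE}, k_{FNE}$ never let the ratio exceed $4/3$ — this amounts to showing $g(k_{NE}) \le \tfrac{4}{3} g(k_{FNE})$ for all valid integer choices, which can be done by noting $g(k_{NE}) \le g(\lceil m \rceil) \le$ (a clean bound in terms of $nC$) and $g(k_{FNE}) \ge g(m/2) = $ (the optimum-type value), then reducing to the one-variable inequality $n \le \tfrac{4}{3}(n - m/4)$ with $m \le n$.

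For tightness, I would exhibit an explicit instance: choose $n, C, L$ with $Cn/L$ close to $n$ (so $m \approx n$) and $F$ chosen so that $(Cn/L - 1)/(1+F)$ or $(Cn/L+F)/(1+F)$ rounds to exactly $m/2 \approx n/2$ — e.g. pick parameters so that $\lfloor (Cn/L+F)/(1+F)\rfloor = n/2$ and $\lfloor Cn/L\rfloor = n$-ish forcing the worst NE component to be $\approx n$ while the worst FNE component is $\approx n/2$; plugging into the two cost formulas yields a ratio arbitrarily close to (or exactly, for suitable integer choices) $4/3$. The hard part will be the bookkeeping with the four floor/ceiling expressions: I must make sure I am comparing the genuinely \emph{worst} NE against the genuinely \emph{worst} FNE (not accidentally the best of one against the worst of the other), and that the rounding never creates a ratio slightly above $4/3$ in some small-$n$ corner case — so I expect to spend most of the effort verifying the inequality $g(k_{NE}) \le \tfrac43 g(k_{FNE})$ holds for the exact integer component sizes rather than just their real-valued proxies.
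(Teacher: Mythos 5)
Your proposal is correct and follows essentially the same route as the paper: the upper bound is obtained by replacing the worst-FNE cost with the minimum of the parabola $g(k)=(n-k)C+k^2L/n$ at $k=Cn/(2L)$ (i.e., the social optimum, which the paper invokes via Lemma~\ref{cliqueopt} and the inequality $C_{FNE}\geq C_{OPT}$), yielding exactly the ratio $1/(1-\tfrac14 C/L)\leq 4/3$, and tightness is realized by the same instance the paper uses ($C=L$, $n$ even, $F=1$, forcing the unique FNE component to size $n/2$ against a NE of cost $nL$). The floor/ceiling bookkeeping you worry about is avoided in the paper by over-estimating the NE cost with its real-valued proxy and lower-bounding the FNE cost by the (rounding-free) optimum, so no additional integer case analysis is needed.
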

\begin{proof}
\emph{Upper Bound.} We first derive the upper bound on
$\Upsilon(F,I)$.
\begin{eqnarray*}
\Upsilon(F,I) &=&
\frac{Cost(K_n,\vec{a}_{\textit{NE}})}{Cost(K_n,\vec{a}_{\textit{FNE}})}\\
&\leq&
\frac{Cost(K_n,\vec{a}_{\textit{NE}})}{Cost(K_n,\vec{a}_{\textit{OPT}})}\\
&\leq& \frac{(n - \lceil Cn/L - 1 \rceil) C + (\lfloor Cn/L
\rfloor)^2 \frac{L}{n}}{(n - \frac{1}{2} Cn/L) C + (\frac{1}{2}
Cn/L)^2 \frac{L}{n}}
\end{eqnarray*}
as the optimal social cost (cf Lemma~\ref{cliqueopt}) is smaller or
equal to the social cost of any FNE. Simplifying this expression
yields
\begin{eqnarray*}
\Upsilon(F,I) &\leq& \frac{n(1-C/L)C + C^2 n/L}{n(1 -
\frac{1}{2}C/L)C + \frac{1}{4} C^2 n/L}=\frac{1}{1-\frac{1}{4}C/L}.
\end{eqnarray*}
This term is maximized for $L = C$, implying that $\Upsilon(F,I)
\leq 4/3$, for arbitrary $n$.

\emph{Lower Bound.} We now show that the ratio between the
equilibria cost reaches $4/3$.

There exists exactly one social optimum of cost $L n/2 + (n/2)^2
L/n=3nL/4$ for even $n$ and $C=L$ by Lemma~\ref{cliqueopt}. For
$F=1$ this is also the only friendship Nash equilibrium due to Lemma
\ref{cliqueFNE}. In the selfish game however the Nash equilibrium
has fewer inoculated players and is of cost $nL$ (see Lemma
\ref{cliqueNE}). Since these are the only Nash equilibria they
constitute the worst equilibria and the ratio is
$$
\Upsilon(F,I) =
\frac{Cost(K_n,\vec{a}_{\textit{NE}})}{Cost(K_n,\vec{a}_{\textit{FNE}})}=\frac{nL}{3/4nL}=4/3.
$$
\hfill\end{proof}

To conclude our analysis of $K_n$, observe that friendship Nash
equilibria always exist in complete graphs, and that in environments
where one player at a time is given the chance to change its
strategy in a best response manner quickly results in such an
equilibrium as all players have the same preferences.

\subsection{Star}

While the analysis of $K_n$ was simple, it turns out that already
slightly more sophisticated graphs are challenging. In the
following, we investigate the Windfall of Friendship in star graphs
$S_n$. Note that in $S_n$, the social welfare is maximized if the
center player inoculates and all other players do not. The total
inoculation cost then is $C$ and the attack components are all of
size $1$, yielding a total social cost of
$Cost(S_n,\vec{a}_{\textit{OPT}}) = C+(n-1)L/n$.

\begin{lem}\label{starOPT}
In the social optimum of the star graph $S_n$, only the center
player is inoculated. The social cost is
$$Cost(S_n,\vec{a}_{\textit{OPT}}) = C+(n-1)L/n.$$
\end{lem}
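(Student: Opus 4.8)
The plan is to show that any strategy profile in which the center player $c$ is insecure can be improved (or at least matched) by the profile in which only $c$ inoculates, and likewise that leaving any leaf inoculated is wasteful once $c$ is secure. First I would split into the two cases according to the strategy of the center. \textbf{Case 1: the center is inoculated.} Then every leaf is isolated from every other leaf (all paths between leaves pass through $c$), so each insecure leaf forms an attack component of size exactly $1$ and contributes $L/n$, while each secure leaf contributes $C$. Since $C > L/n$ by our standing assumption, replacing a secure leaf by an insecure one strictly decreases the social cost; hence the optimal configuration with $c$ secure has all leaves insecure, giving cost exactly $C + (n-1)L/n$. \textbf{Case 2: the center is insecure.} Let $j$ be the number of insecure leaves. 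The center together with these $j$ leaves forms one attack component of size $j+1$, so the center and each insecure leaf each pay $(j+1)L/n$, while each of the $n-1-j$ secure leaves pays $C$. The total is $C_a = (n-1-j)C + (j+1)^2 L/n$.

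The core of the argument is then to check that this Case 2 cost is never smaller than $C + (n-1)L/n$. I would compare directly: we want
\begin{displaymath}
(n-1-j)C + (j+1)^2\frac{L}{n} \;\geq\; C + (n-1)\frac{L}{n}.
\end{displaymath}
Equivalently, $(n-2-j)C \geq \big((n-1) - (j+1)^2\big)L/n$. When $j = n-1$ (all leaves insecure, the fully insecure network) the left side is $-C$ and the right side is $\big((n-1)-n^2\big)L/n < -C$ precisely because $C > L/n$ implies $Cn/L > 1$, and indeed $\big(n^2-(n-1)\big)L/n > nL/n > C$ for $n\geq 2$; so the inequality holds. For $0 \le j \le n-2$ the left side is nonnegative while, using the lower end of the parameter range is not even needed, one checks the right side is dominated: since $(j+1)^2 \ge j+1 \ge 1$ and the deficit $(n-1)-(j+1)^2$ decreases fast in $j$ while the savings $(n-2-j)C$ with $C > L/n$ outpace it. A clean way to finish is to observe the Case 2 cost is a convex function of $j$ minimized at $j+1 = Cn/(2L)$, evaluate at the relevant integer, and note it still exceeds $C + (n-1)L/n$; alternatively one simply notes $(n-1-j)C + (j+1)^2L/n \ge (n-1-j)\frac{L}{n} + (j+1)\frac{L}{n} = n\frac{L}{n} = L \ge C + (n-1)\frac{L}{n}$ is false in general, so the convexity/endpoint check is the honest route.

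The main obstacle is precisely this last inequality in Case~2: it must be verified for every integer $j \in \{0,\dots,n-1\}$ using only $C \le L$ and $C > L/n$, and a naive bound like replacing $C$ by $L/n$ is too weak. I expect the cleanest resolution is to treat $f(j) = (n-1-j)C + (j+1)^2 L/n$ as a quadratic in $j$, note it is convex, and argue that both at $j$ near $Cn/(2L) - 1$ (its continuous minimizer) and by monotonicity toward the endpoints, $f(j) \ge f$ evaluated at the minimizer $\ge \frac{3}{4}\cdot$(something) which still beats $C + (n-1)L/n$; comparing with Lemma~\ref{cliqueopt} and Theorem~\ref{4/3} shows the gap is actually bounded by a constant factor, so there is ample slack. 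Once Case~2 is dispatched, the Lemma follows immediately by combining it with Case~1.
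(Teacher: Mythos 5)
The paper itself gives no proof of this lemma --- it is asserted in the sentence immediately preceding it (``Note that in $S_n$, the social welfare is maximized if the center player inoculates and all other players do not\dots'') --- so your case analysis is a genuine addition rather than a reproduction of the paper's argument. Your decomposition is the right one and your cost formulas are correct: with the center secure, every leaf is its own attack component and $C>L/n$ forces all leaves insecure, giving $C+(n-1)L/n$; with the center insecure and $j$ insecure leaves, the cost is $f(j)=(n-1-j)C+(j+1)^2L/n$, and the lemma reduces to $f(j)\geq C+(n-1)L/n$ for all $j\in\{0,\dots,n-1\}$.

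The flaw is in how you propose to finish Case 2. Evaluating the convex quadratic at its \emph{continuous} minimizer $j^*=Cn/(2L)-1$ does not work on the whole parameter range: whenever $C<2L/n$ we have $j^*<0$, and at $j^*$ one computes $f(j^*)-\bigl(C+(n-1)L/n\bigr)=(n-1)(C-L/n)-C^2n/(4L)$, which is \emph{negative} for $C$ close to $L/n$ (e.g.\ $C=1.01\,L/n$, $n=10$ gives about $-0.165\,L/n$). So the inequality ``$f(j)\geq f(j^*)\geq C+(n-1)L/n$'' is false as stated; you must first restrict to the feasible range $j\geq 0$ and split on whether $Cn/(2L)\geq 1$. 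A cleaner repair avoids convexity entirely. Writing $m=j+1$ for the attack component size, the required inequality is
\begin{displaymath}
(n-1-m)\,C \;\geq\; \bigl(n-1-m^2\bigr)\frac{L}{n}, \qquad m\in\{1,\dots,n\}.
\end{displaymath}
For $m\leq n-1$ the left side is nonnegative, so $C\geq L/n$ gives $(n-1-m)C\geq (n-1-m)L/n\geq (n-1-m^2)L/n$, the last step because $m\leq m^2$. For $m=n$ the inequality reads $(n^2-n+1)L/n\geq C$, which holds since $(n^2-n+1)/n=n-1+1/n\geq 1$ for $n\geq 2$ and $L\geq C$. With this substitution your argument is complete and self-contained.
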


The situation where only the center player is inoculated also
constitutes a NE. However, there are more Nash equilibria.
\begin{lem}\label{starNE}
In the star graph $S_n$, there are at most three Nash equilibria
with social cost
\begin{footnotesize}
\begin{eqnarray*}
\textit{NE$_1$: }Cost(S_n,\vec{a}_{\textit{NE1}}) &=& C+(n-1)L/n,\\
\textit{NE$_2$: }Cost(S_n,\vec{a}_{\textit{NE2}}) &=& C(n-\lceil Cn/L \rceil +1 ) + L/n(\lceil Cn/L \rceil -1 )^2,\\
\textit{and~~~~~~~~~~~~~~~~~~~~~~~~~~~}\\
\textit{NE$_3$: }Cost(S_n,\vec{a}_{\textit{NE3}}) &=& C(n-\lfloor
Cn/L \rfloor) + L/n(\lfloor Cn/L \rfloor)^2.
\end{eqnarray*}
\end{footnotesize}

If $Cn/L\notin \mathbb{N}$, only two equilibria exist.
\end{lem}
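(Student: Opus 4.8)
The plan is to enumerate all pure Nash equilibria of $S_n$ by a case distinction on the strategy of the center player $c$, reducing the equilibrium condition in each case to a constraint on a single integer parameter, namely the total number $m$ of insecure players.

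\emph{Case $a_c = 1$ (center inoculated).} Then each leaf is the unique member of its own attack component, so an insecure leaf pays $L/n$ and an inoculated one pays $C$; since $C > L/n$ by the standing assumptions, every secure leaf would strictly profit from switching, so in any such equilibrium all $n-1$ leaves are insecure. The center has no profitable deviation either: turning insecure would place it in an attack component of all $n$ players, for a cost of $L \ge C$. Hence this case contributes exactly one equilibrium, with social cost $C + (n-1)L/n$ (this is NE$_1$, and by Lemma~\ref{starOPT} it coincides with the social optimum).

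\emph{Case $a_c = 0$ (center insecure).} Now $c$ and all insecure leaves form a single attack component of size $m$, so there are $m-1$ insecure leaves, $n-m$ secure leaves, and $1 \le m \le n$; the social cost equals $(n-m)C + m^2 L/n$. An insecure player (leaf or center) pays $Lm/n$ and would pay $C$ after inoculating, so it is content iff $Lm/n \le C$, i.e. $m \le Cn/L$. A secure leaf that turned insecure would join $c$'s component, enlarging it to $m+1$, so it is content iff $C \le L(m+1)/n$, i.e. $m \ge Cn/L - 1$ (a constraint that is void when $m=n$, but is then still implied by $m \le Cn/L$ since $C \le L$). Thus the insecure-center equilibria are exactly those with $Cn/L - 1 \le m \le Cn/L$. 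This interval contains a single integer $m = \lfloor Cn/L \rfloor = \lceil Cn/L \rceil - 1$ when $Cn/L \notin \mathbb{N}$, and two integers $m = \lceil Cn/L \rceil - 1$ and $m = \lfloor Cn/L \rfloor$ when $Cn/L \in \mathbb{N}$; substituting into $(n-m)C + m^2 L/n$ yields the costs NE$_2$ and NE$_3$. Both candidate values are feasible, $1 \le m \le n$: the lower bound follows from $C > L/n$ (so $Cn/L > 1$, hence an integer value is at least $2$) and the upper bound from $C \le L$ (so $Cn/L \le n$). Adding the single equilibrium from the first case gives at most three equilibria overall, dropping to two when $Cn/L \notin \mathbb{N}$.

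The single-variable inequalities and the substitution into the cost formula are routine. The points I expect to require care are (i) checking that the interval $[Cn/L - 1,\, Cn/L]$ really captures every insecure-center equilibrium — in particular that the degenerate all-insecure profile ($m = n$, which can occur only when $C = L$) is subsumed by NE$_3$ rather than showing up as a fourth equilibrium — and (ii) verifying feasibility ($1 \le m \le n$) of the rounded parameter values so that the listed equilibria are not spurious; both are settled by invoking the standing parameter restrictions $L/n < C \le L$.
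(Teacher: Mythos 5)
Your proof is correct and takes essentially the same route as the paper's: a case split on the center's strategy, followed by reducing the equilibrium conditions to integer bounds on the number of insecure players (the paper parameterizes by the number of insecure leaves $n_0 = m-1$ rather than the component size $m$, which is immaterial). If anything, your write-up is slightly more complete, since you explicitly rule out equilibria in which both the center and some leaf are secure, and you check feasibility of the rounded values and the degenerate all-insecure profile, points the paper's proof passes over.
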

\begin{proof}
If the center player is the only secure player, changing its
strategy costs $L$ but saves only $C$. When a leaf player becomes
secure, its cost changes from $L/n$ to $C$. These changes are
unprofitable, and the social cost of this NE is
$Cost(S_n,\vec{a}_{\textit{NE1}}) = C + (n-1)L/n$.

For the other Nash equilibria the center player is not inoculated.
Let the number of insecure leaf players be $n_0$. In order for a
secure player to remain secure, it must hold that $C \leq
(n_0+2)L/n$, and hence $n_0 \geq \lceil Cn/L-2 \rceil$. For an
insecure player to remain insecure, it must hold that $(1+n_0)L/n
\leq C$, thus $n_0 \leq \lfloor Cn/L-1 \rfloor$. Therefore, we can
conclude that there are at most two Nash equilibria, one with
$\lceil Cn/L-1 \rceil$ and one with $\lfloor Cn/L \rfloor$ many
insecure players. The total social cost follows by substituting
$n_0$ in the total social cost function. Finally, observe that if
$Cn/L \in \mathbb{N}$ and $Cn/L>3$, all three equilibria exist in
parallel; if $Cn/L \notin \mathbb{N}$, NE$_2$ and NE$_3$ become one
equilibrium.\hfill\end{proof}

Let us consider the social network scenario again.
\begin{lem}\label{starFNE}
In $S_n$, there are at most three friendship Nash equilibria with
social cost
\begin{footnotesize}
\begin{eqnarray*}
\textit{FNE$_1$: }Cost(S_n,\vec{a}_{\textit{FNE1}}) &=& C+(n-1)L/n,\\
\textit{FNE$_2$: }Cost(S_n,\vec{a}_{\textit{FNE2}}) &=& C(n-\lceil Cn/L -F \rceil +1 ) + L/n(\lceil Cn/L -F \rceil -1 )^2,\\
\textit{and~~~~~~~~~~~~~~~~~~~~~~~~~~~}\\
\textit{FNE$_3$: }Cost(S_n,\vec{a}_{\textit{FNE3}}) &=& C(n-\lfloor
Cn/L -F\rfloor) + L/n(\lfloor Cn/L-F \rfloor)^2.
\end{eqnarray*}
\end{footnotesize}
If $Cn/L-F\notin \mathbb{N}$, at most 2 friendship Nash equilibria
exist.
\end{lem}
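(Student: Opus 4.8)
The plan is to mirror the structure of the proof of Lemma~\ref{starNE}, but with the inoculation threshold for each player replaced by the friendship-adjusted threshold coming from Lemma~\ref{lemma:ac-size}. As in the selfish case, I would first observe that the profile in which only the center player is inoculated and all leaves are insecure is always an FNE: the center's perceived cost is $C + F(n-1)L/n$, while switching to insecure would give it $L + F(n-1)L$, which is larger since $C \le L$; and each leaf, having only the (secure) center as a neighbor, sees perceived cost $L/n + FC$, whereas inoculating would cost $C + FC > L/n + FC$ because $C > L/n$. This yields $FNE_1$ with cost $C + (n-1)L/n$.

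Next I would treat the equilibria in which the center is insecure. Let $n_0$ be the number of insecure leaves, so the center's attack component has size $k = n_0 + 1$. A secure leaf $l$ has exactly one neighbor, the center, which is insecure with attack component size $n_0 + 2$ if $l$ became insecure; applying Lemma~\ref{lemma:ac-size} (with $|\Gamma_{\overline{sec}}(l)| = 1$ and the single $k_j = n_0+2$), $l$ prefers to stay secure iff $n_0 + 2 > (Cn/L + F(n_0+2))/(1+F)$, which simplifies to $n_0 > Cn/L - 2 - F$, i.e. $n_0 \ge \lceil Cn/L - F \rceil - 1$. Dually, an insecure leaf $l'$ has the center as its insecure neighbor with component size $n_0$ (the center's component when $l'$ is insecure), so Lemma~\ref{lemma:ac-size} says $l'$ stays insecure iff $n_0 \le (Cn/L + Fn_0)/(1+F)$, giving $n_0 \le Cn/L - F$, i.e. $n_0 \le \lfloor Cn/L - F\rfloor$. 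I should also check the center itself is content: with all its insecure neighbors having component size $n_0+1$ equal to its own, the center's perceived-cost comparison reduces to the plain threshold $k \le Cn/L$ versus $k > Cn/L$, which is automatically compatible with the leaf constraints in the relevant range, so it imposes no new binding condition (one can note $Cn/L - F \le Cn/L$). Thus $n_0$ ranges over the integers in $[\lceil Cn/L - F\rceil - 1,\ \lfloor Cn/L - F\rfloor]$, an interval of length at most one, yielding at most two values; substituting $k = n_0 + 1$ into $Cost(S_n,\vec a) = (n-k)C + k^2 L/n$ gives $FNE_2$ and $FNE_3$. When $Cn/L - F \notin \mathbb{N}$ the ceiling-minus-one and the floor coincide, so these two collapse to one, leaving at most two FNE in total; this gives the final sentence of the statement.

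The main obstacle I anticipate is bookkeeping the ``$+1$ versus $-1$'' shifts correctly: the attack-component size relevant to a \emph{secure} leaf's deviation ($n_0+2$), to an \emph{insecure} leaf's status ($n_0$), and to the center ($n_0+1$) are all different, and Lemma~\ref{lemma:ac-size} is phrased in terms of ``the $k_j$s assuming $p_i$ is secure,'' so one must be careful that for an insecure leaf $l'$ the center's component size is $n_0$ (not $n_0+1$), since removing $l'$ from the insecure set shrinks the component. Getting these offsets right is exactly what makes the floor/ceiling arguments land on $Cn/L - F$ rather than, say, $Cn/L - 2F$ or $Cn/L$. A secondary point worth stating explicitly is the edge case analogous to the one in Lemma~\ref{starNE}: when $Cn/L - F$ is a positive integer and large enough, all three profiles are genuinely distinct simultaneous equilibria, which is why the bound is ``at most three'' rather than ``exactly.''
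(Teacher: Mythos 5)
Your overall strategy is exactly the paper's (establish the center-only profile as an FNE, then bound the number $n_0$ of insecure leaves in any equilibrium with an insecure center, and substitute into the cost function), and your treatment of $FNE_1$ is correct. However, the bookkeeping you yourself flagged as the main risk does go wrong, and the errors propagate to the final formulas. For a \emph{secure} leaf $l$, the quantity $k_j$ in Lemma~\ref{lemma:ac-size} is the center's component size \emph{assuming $l$ is secure}, i.e.\ $n_0+1$, not $n_0+2$; with $k_j=n_0+1$ the condition to stay secure is indeed $n_0>Cn/L-F-2$, but this converts to $n_0\geq\lceil Cn/L-F\rceil-2$, not $\lceil Cn/L-F\rceil-1$ (note also that the inequality you actually wrote, $n_0+2>(Cn/L+F(n_0+2))/(1+F)$, simplifies to the friendship-free $n_0>Cn/L-2$, not to $n_0>Cn/L-F-2$). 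For an \emph{insecure} leaf $l'$, the left-hand side of Lemma~\ref{lemma:ac-size} must be $k_{l'}=n_0+1$, since its attack component contains the center and all $n_0$ insecure leaves including $l'$ itself; the condition is therefore $n_0+1\leq(Cn/L+Fn_0)/(1+F)$, giving $n_0\leq Cn/L-F-1$, i.e.\ $n_0\leq\lfloor Cn/L-F\rfloor-1$, not $\lfloor Cn/L-F\rfloor$ (the inequality you wrote, $n_0\leq(Cn/L+Fn_0)/(1+F)$, simplifies to $n_0\leq Cn/L$, not $n_0\leq Cn/L-F$). The correct range is $\lceil Cn/L-F\rceil-2\leq n_0\leq\lfloor Cn/L-F\rfloor-1$, and substituting $k=n_0+1$ then yields precisely $FNE_2$ and $FNE_3$ as stated; your range, shifted up by one, produces component sizes $\lceil Cn/L-F\rceil$ and $\lfloor Cn/L-F\rfloor+1$ and hence cost formulas that do not match the lemma.

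A secondary inaccuracy concerns the center: the $k_j$'s of its insecure leaf neighbors, taken assuming the center is secure, are all equal to $1$ (each leaf would be isolated), so the center's condition to stay insecure is $Fn_0^2+n_0+1\leq Cn/L$ (cf.\ Lemma~\ref{lemma:uniqueFNE}), which for $n_0\geq 2$ is strictly stronger than the leaf conditions; it is not the ``plain threshold'' and it is genuinely binding---indeed it is exactly what eliminates the bad equilibria in Lemma~\ref{lemma:uniqueFNE}. This omission does not endanger the ``at most three'' claim, since the leaf conditions alone are necessary and already pin $n_0$ to an interval of length one, but it does undermine your closing assertion that all three profiles are simultaneously realizable without further checking.
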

\begin{proof}
First, observe that having only an inoculated center player
constitutes a FNE. In order for the center player to remain
inoculated, it must hold that $C + F(n-1)L\frac{1}{n} \leq nL/n +
F(n-1)L\frac{n}{n} =  L + F(n-1)L$. All leaf players remain insecure
as long as $L/n + FC \leq C + FC ~~\Leftrightarrow L/n \leq C$.
These conditions are always true, and we have
$Cost(S_n,\vec{a}_{\textit{FNE1}}) = C+(n-1)L/n$.%%%
If the center player is not inoculated, we have $n_0$ insecure and
$n-n_0-1$ inoculated leaf players. In order for a secure leaf player
to remain secure, it is necessary that $C + F \frac{n_0+1}{n}L \leq
\frac{n_0+2}{n}L + F \frac{n_0+2}{n}L$, so $n_0 \geq \lceil Cn/L - F
\rceil -2$. For an insecure leaf player, it must hold that
$\frac{n_0+1}{n}L + F \frac{n_0+1}{n}L \leq C + F \frac{n_0}{n}L$,
so $n_0 \leq \lfloor Cn/L - F \rfloor -1$. The claim follows by
substitution. \hfill\end{proof}

Note that there are instances where FNE$_1$ is the only friendship
Nash equilibrium. We already made use of this phenomenon in Section
\ref{sec:general} to show that $\Upsilon(F,I)$ is not monotonically
increasing in $F$. The next lemma states under which circumstances
this is the case.

\begin{lem}\label{converge to best NE}\label{lemma:uniqueFNE}
In $S_n$, there is a unique FNE equivalent to the social optimum if
and only if
\begin{displaymath}
\lfloor Cn/L-F \rfloor - \lfloor \frac{1}{2F} (\sqrt{1-4F(1-Cn/L)}
-1 ) \rfloor -2 \geq 0
\end{displaymath}
\end{lem}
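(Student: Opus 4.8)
The plan is to exploit the complete classification of equilibria already available. By Lemma~\ref{starOPT} and Lemma~\ref{starFNE}, the configuration with only the center inoculated (FNE$_1$) is always a friendship Nash equilibrium and realizes the social optimum, while every other FNE is one of FNE$_2$, FNE$_3$, all of which have the center insecure. Hence ``there is a unique FNE equal to the social optimum'' is equivalent to ``no configuration with an insecure center is an FNE.'' I would first record two facts that already follow from earlier observations: (i) if the center is secure, every leaf strictly prefers to be insecure, since switching to secure raises its own cost from $L/n$ to $C > L/n$ while leaving the center's cost unchanged, so FNE$_1$ is the only FNE with a secure center; and (ii) if the center is insecure with $n_0$ insecure leaves, all insecure players form a single attack component of size $n_0+1$, so the leaf stability conditions are exactly those derived in the proof of Lemma~\ref{starFNE}: a secure leaf stays secure iff $n_0 \ge \lceil Cn/L - F \rceil - 2$ and an insecure leaf stays insecure iff $n_0 \le \lfloor Cn/L - F \rfloor - 1$.

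The ingredient still missing is the center's own stability, which the proof of Lemma~\ref{starFNE} did not impose. I would obtain it from Lemma~\ref{lemma:ac-size} applied to the center: it has $n_0$ insecure neighbours and an attack component of size $n_0+1$, and --- this is the delicate point --- \emph{if the center were secure} each insecure leaf would be isolated, so all the $k_j$ in Lemma~\ref{lemma:ac-size} equal $1$ and $\sum_j k_j = n_0$. Substituting $k_i = n_0+1$, $|\Gamma_{\overline{sec}}(p_i)| = n_0$ and $\sum_j k_j = n_0$ and simplifying, the center refrains from inoculating exactly when $F n_0^2 + n_0 + 1 \le Cn/L$. Since $F > 0$ and $1 - Cn/L < 0$ (as $C > L/n$), this quadratic in $n_0$ has discriminant $1 - 4F(1-Cn/L) > 0$ and a unique nonnegative root, so the condition reads $n_0 \le \frac{1}{2F}\left( \sqrt{1 - 4F(1 - Cn/L)} - 1 \right)$, i.e. $n_0 \le \left\lfloor \frac{1}{2F}\left( \sqrt{1 - 4F(1 - Cn/L)} - 1 \right) \right\rfloor$.

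Putting the pieces together, a configuration with an insecure center is an FNE iff some integer $n_0 \ge 0$ satisfies both the secure-leaf lower bound and the center upper bound (one checks that for $n_0 \ge 1$ the center bound is at least as tight as the insecure-leaf bound $n_0 \le \lfloor Cn/L - F \rfloor - 1$). Because increasing $n_0$ only makes the center keener to inoculate, the admissible values of $n_0$ form the interval $[\lceil Cn/L - F \rceil - 2,\ \lfloor \frac{1}{2F}(\sqrt{1 - 4F(1 - Cn/L)} - 1) \rfloor]$, so such an $n_0$ exists iff its smallest candidate already satisfies the center bound; equivalently, \emph{no} such FNE exists iff $\lceil Cn/L - F \rceil - 2 > \left\lfloor \frac{1}{2F}\left( \sqrt{1 - 4F(1 - Cn/L)} - 1 \right) \right\rfloor$. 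Rewriting this strict inequality between integers as $\lceil Cn/L - F \rceil - 2 \ge \left\lfloor \frac{1}{2F}(\cdots) \right\rfloor + 1$ and using $\lceil x \rceil = \lfloor x \rfloor + 1$ for the (generic) non-integral value $x = Cn/L - F$ yields exactly the displayed inequality.

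I expect the main obstacle to be bookkeeping rather than anything conceptual: one must separately handle $n_0 = 0$ (insecure center, all leaves secure), where the insecure-leaf condition is vacuous and the center is trivially stable, and confirm it produces no FNE outside the regime already covered; rule out $n_0 = n-1$ using $Cn/L \le n$; and check that the reduction of the integer inequality to the stated form behaves correctly at the boundary case $Cn/L - F \in \mathbb{Z}$. The algebra converting the center-stability inequality into its root form is routine but should be carried out explicitly.
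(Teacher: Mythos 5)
Your argument is correct and follows essentially the same route as the paper's own proof: both reduce uniqueness to the non-existence of an insecure-center equilibrium, take the leaf stability bounds from the proof of Lemma~\ref{starFNE}, derive the center's condition $Fn_0^2+n_0+1-Cn/L>0$ (you via Lemma~\ref{lemma:ac-size} with all $k_j=1$, the paper by writing out the perceived costs directly --- the same computation), and then solve the quadratic and intersect the resulting integer ranges. If anything, your version is slightly more explicit than the paper's about which leaf bound is binding and about the degenerate case $Cn/L-F\in\mathbb{Z}$.
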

\begin{proof}
$S_n$ has only one FNE if and only if every (insecure) leaf player
is content with its chosen strategy but the insecure center player
would rather inoculate. In order for an insecure leaf player to
remain insecure we have $n_0 \leq \lfloor Cn/L - 1 - F \rfloor$ and
the insecure center player wants to inoculate if and only if
\begin{footnotesize}
\begin{eqnarray*} C +
F(n-n_0-1)C + Fn_0 \frac{1}{n}L < (n_0+1)\frac{L}{n} + F(n-n_0-1)C +
Fn_0 \frac{n_0+1}{n}L,\end{eqnarray*} \end{footnotesize} which is
equivalent to $ Fn_0^2 + n_0 + 1 - Cn/L > 0.$ This implies that $n_0
\geq \lfloor \frac{1}{2F} (\sqrt{1-4F(1-Cn/L)} -1 ) +1 \rfloor.$
Therefore there is only one FNE if and only if there exists an
integer $n_0$ such that $\lfloor \frac{1}{2F} (\sqrt{1-4F(1-Cn/L)}
-1 ) +1 \rfloor \leq n_0 \leq \lfloor Cn/L - 1 - F \rfloor$.
\hfill\end{proof}

Given the characterization of the various equilibria, the Windfall
of Friendship can be computed.
\begin{thm}\label{thm:starFF}
If $\lfloor \frac{1}{2F} (\sqrt{1-4F(1-Cn/L)} -1 ) \rfloor
+2-\lfloor Cn/L-F \rfloor \leq 0$,
 the Windfall of Friendship is $$\Upsilon(F,I)\geq \frac{(n-2)C+L/n}{C+(n-1)L/n}, \textit{ ~~else~~ }\Upsilon(F,I)\leq \frac{n+1}{n-3}.$$
\end{thm}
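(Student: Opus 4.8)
The hypothesis of the theorem is exactly the (rearranged) condition of Lemma~\ref{lemma:uniqueFNE}, so the two cases split according to whether $S_n$ has a unique friendship Nash equilibrium. The plan is to pin down, in each case, the worst FNE and a matching bound on the worst NE, using the equilibrium catalogues of Lemmas~\ref{starNE}, \ref{starFNE}, \ref{lemma:uniqueFNE} together with the fact (Aspnes et al.~\cite{virusgame}, recalled at the start of Section~\ref{sec:general}) that in any NE no attack component exceeds $Cn/L$ insecure players. Throughout I will write $x:=Cn/L\in(1,n]$ and use the identity $C=xL/n$ to eliminate $L/n$; this substitution is what keeps the estimates tight. (We assume $F>0$ and $n>3$; for $F=0$ the claim is immediate.)

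\emph{Case 1 (the displayed inequality holds).} By Lemma~\ref{lemma:uniqueFNE} the only FNE is FNE$_1$ of Lemma~\ref{starFNE}, i.e.\ the social optimum of Lemma~\ref{starOPT} with cost $C+(n-1)L/n$, so this is also the worst FNE. For the worst NE I would use the equilibrium NE$_2$ of Lemma~\ref{starNE}, which always exists in the admissible parameter range (one checks $0\le\lceil x\rceil-2\le\lfloor x\rfloor-1\le n-1$) and has cost $C(n-\lceil x\rceil+1)+(L/n)(\lceil x\rceil-1)^{2}$. Writing $m:=\lceil x\rceil$ and substituting $C=xL/n$, this cost minus $(n-2)C+L/n$ equals $(L/n)\bigl[(3-m)x+m(m-2)\bigr]$, which is nonnegative: for $m\le3$ both terms are (using $m\ge2$ and $1<x\le m$), and for $m\ge4$ one bounds $(3-m)x\ge(3-m)m$, giving $(3-m)x+m(m-2)\ge m>0$. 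Hence the worst NE costs at least $(n-2)C+L/n$, and dividing by the worst-FNE cost yields $\Upsilon(F,I)\ge\frac{(n-2)C+L/n}{C+(n-1)L/n}$.

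\emph{Case 2 (the displayed inequality fails).} By Lemma~\ref{lemma:uniqueFNE} the FNE is not unique. Since FNE$_1$ (the optimal profile) is always a FNE, and since every FNE with a secure center must in fact be FNE$_1$ --- a secure leaf adjacent to a secure center would strictly prefer to defect, as $L/n<C$, so all leaves would be insecure --- there is another FNE $\vec{a}^{*}$ whose center is \emph{insecure}; let $j\ge1$ be its number of insecure players, so (a star has a single attack component) its social cost is $g(j):=(n-j)C+j^{2}L/n$. Lemma~\ref{lemma:ac-size} and the FNE conditions derived in the proofs of Lemmas~\ref{starFNE} and~\ref{lemma:uniqueFNE} give $j\le x$ because the insecure center does not wish to inoculate, and $j\ge x-2$ because a secure leaf does not wish to defect; the boundary sub-case $n_{0}=0$ (all leaves secure), not covered verbatim by the FNE$_2$/FNE$_3$ formulas, must be checked by hand but forces $1<x\le 2+F\le3$ and hence still $x-j=x-1\le2$, while $n_{0}=n-1$ is impossible since then the center would strictly prefer to inoculate. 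Now $g(j)=C\,(n-j+j^{2}/x)$, and from $0\le x-j\le2$ and $0<j\le x\le n$ we get $\frac{j(x-j)}{x}\le x-j\le2\le\frac{4n}{n+1}$, so $n-j+j^{2}/x\ge n-\frac{4n}{n+1}=\frac{n-3}{n+1}\,n$, i.e.\ the worst FNE costs at least $\frac{n-3}{n+1}\,nC$. On the other hand, in any NE a secure player contributes exactly $C$ and an insecure player contributes $Lk_i/n\le L(Cn/L)/n=C$, so the worst NE costs at most $nC$. Dividing, $\Upsilon(F,I)\le\frac{nC}{\frac{n-3}{n+1}nC}=\frac{n+1}{n-3}$.

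\emph{Main obstacle.} The algebra above is routine once the setup is right, so the real work is in Case 2: identifying exactly which profiles are FNE. In particular the profile with insecure center and all leaves secure ($n_{0}=0$) is a legitimate FNE in a parameter band not described by Lemma~\ref{starFNE}, the profile $n_{0}=n-1$ must be ruled out, and one must argue that a non-optimal FNE necessarily has an insecure center; only after establishing the clean trap $x-2\le j\le x$ do the two one-line inequalities close. A secondary (but essential) point is that ``worst NE $\le nC$'' and ``worst FNE $\ge\frac{n-3}{n+1}nC$'' are only strong enough because $C$ cancels after rewriting $L/n=C/x$; keeping $C$ and $L$ apart loses a factor $\Theta(n)$.
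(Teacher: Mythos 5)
Your proof is correct and follows essentially the same route as the paper's: split on the condition of Lemma~\ref{lemma:uniqueFNE}; in the first case compare the always-existing insecure-center NE against the unique optimal FNE; in the second case upper-bound the worst NE and lower-bound the cost of an insecure-center FNE to obtain $(n+1)/(n-3)$. The only differences are in the bookkeeping: in Case 2 you replace the paper's direct substitution of the NE$_2$/FNE$_3$ formulas from Lemmas~\ref{starNE} and~\ref{starFNE} by the generic bounds ``any NE costs at most $nC$'' and ``the number of insecure players $j$ in a non-optimal FNE satisfies $x-2\le j\le x$'', and you verify explicitly several points the paper glosses over (existence of NE$_2$, nonnegativity of $(3-m)x+m(m-2)$ in Case 1, and the boundary profiles $n_0=0$ and $n_0=n-1$ in Case 2), which makes your write-up somewhat more rigorous while reaching the identical constants.
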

\begin{proof}
According to Lemma~\ref{lemma:uniqueFNE}, the friendship Nash
equilibrium is unique and hence equivalent to the social optimum if
$$\lfloor Cn/L-F \rfloor - \lfloor \frac{1}{2F} (\sqrt{1-4F(1-Cn/L)}
-1 ) \rfloor -2 \geq 0.$$  On the other hand, observe that there
always exist sub-optimal Nash equilibria where the center player is
not inoculated. Hence, we have
\begin{eqnarray*}
\Upsilon(F,I) &=& \frac{Cost(S_n,\vec{a}_{\textit{NE}})}{Cost(S_n,\vec{a}_{\textit{FNE}})}= \frac{Cost(S_n,\vec{a}_{\textit{NE}})}{Cost(S_n,\vec{a}_{\textit{OPT}})}\\
&\geq& \frac{(n-\lfloor Cn/L-1 \rfloor)C + (\lceil Cn/L \rceil - 1)^2 L/n}{C+(n-1)L/n}\\
&\geq& \frac{C(n-2)+L/n}{C+(n-1)L/n}.
\end{eqnarray*}

Otherwise, i.e., if there exist friendship Nash equilibria with an
insecure center player, an upper bound for the WoF can be computed
\begin{eqnarray*}
\Upsilon(F,I) &=& \frac{Cost(S_n,\vec{a}_{\textit{NE}})}{Cost(S_n,\vec{a}_{\textit{FNE}})}\\
&\leq& \frac{(n-\lceil Cn/L-1 \rceil)C + (\lfloor Cn/L \rfloor)^2 L/n}{(n-\lfloor Cn/L - F \rfloor)C + (\lceil Cn/L - 1 - F \rceil)^2 L/n}\\
&\leq& \frac{(n+1)C}{nC+FC-2C(1+F)+(1+F)^2 L/n}\\
&<& \frac{(n+1)C}{C(n+F-2(1+F))}~~~<~~~ \frac{n+1}{n-3}.
\end{eqnarray*}
\hfill\end{proof}

Theorem~\ref{thm:starFF} reveals that caring about the cost incurred
by friends is particularly helpful to reach more desirable
equilibria. In large star networks, the social welfare can be much
higher than in Nash equilibria: in particular, the Windfall of
Friendship can increase linearly in $n$, and hence indeed be
asymptotically as large as the Price of Anarchy. However, if
$\lfloor Cn/L-F \rfloor - \lfloor \frac{1}{2F} (\sqrt{1-4F(1-Cn/L)}
-1 )  \rfloor-2 \geq 0$ does not hold, social networks are not much
better than purely selfish systems: the maximal gain is constant.

Finally observe that in stars friendship Nash equilibria always
exist and can be computed efficiently (in linear time) by any best
response strategy.

\subsection{Discussion}

This section has focused on a small set of very simple topologies
only and we regard the derived results as a first step towards more
complex graph classes such as Kleinberg graphs featuring the
small-world property. Interestingly, however, our findings have
implications for general topologies. We could show that even in
simple graphs such as the star graph, the Windfall of Friendship can
assume all possible values, from constant ratios up to ratios linear
in $n$. This is asymptotically maximal for general graphs as well
since the Price of Anarchy is bounded by $n$~\cite{virusgame}.

\section{On Relative Equilibria}\label{sec:relative}

In the model we have studied so far, the actual cost of each
friend---weighted by a factor $F$---is added to a player's perceived
cost. This describes a situation where friends are taken into
account individually and independently of each other. However, one
could imagine scenarios where the relative importance of a friend
decreases with the total number of friends, that is, a player with
many friends may care less about the welfare of a specific friend
compared to a player who only has one or two friends. This motivates
an alternative approach to describe perceived costs:
\begin{defn}[Relative Perceived Cost]\label{rel_perceived cost}$\\$
The \emph{relative perceived individual cost} of a player $p_i$ is
defined as
\begin{displaymath}
c_r(i,\vec{a}) = c_a(i,\vec{a}) + F \cdot \frac{\sum_{p_j \in
\Gamma(p_i)} c_a(j,\vec{a})}{|\Gamma(p_i)|}.
\end{displaymath}
In the following, we write $c_r^0(i,\vec{a})$ to denote the relative
perceived cost of an insecure player $p_i$ and $c_r^1(i,\vec{a})$
for the relative perceived cost of an inoculated player.
\end{defn}
We will refer to an FNE equilibrium with respect to relative
perceived costs by \emph{rFNE}.

It turns out that while relative equilibria have similar properties
as regular friendship equilibria and most of our techniques are
still applicable, there are some crucial differences. Let us first
consider the size of the attack components under rFNE.
\begin{lem}\label{lemma:ac-size-rel}
The player $p_i$ will inoculate if and only if the size of its
attack component is
\begin{displaymath}
k_i > \frac{|\Gamma(p_i)|\cdot Cn/L + F \cdot \sum_{p_j \in
\Gamma_{\overline{sec}}(p_i)}{k_j}}{|\Gamma(p_i)| + F
|\Gamma_{\overline{sec}}(p_i)|},
\end{displaymath}
where the $k_j$s are the attack component sizes of $p_i$'s insecure
neighbors assuming $p_i$ is secure.
\end{lem}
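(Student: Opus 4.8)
The plan is to mirror the derivation of Lemma~\ref{lemma:ac-size}, replacing the absolute weighting $F$ by the relative weighting $F/|\Gamma(p_i)|$ throughout. Since every term that appears in the perceived cost of $p_i$ acquires exactly the same factor $1/|\Gamma(p_i)|$ in front of the friendship sum, the algebra is structurally identical; the only care needed is to track the extra $|\Gamma(p_i)|$ in the denominators and to clear it at the right moment.

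Concretely, I would first write down the relative perceived cost of $p_i$ under each of its two possible strategies. By Definition~\ref{rel_perceived cost}, an inoculated $p_i$ has
\begin{displaymath}
c_r^1(i,\vec{a}) = C + \frac{F}{|\Gamma(p_i)|}\left( |\Gamma_{sec}(p_i)|\,C + \sum_{p_j \in \Gamma_{\overline{sec}}(p_i)} L\frac{k_j}{n} \right),
\end{displaymath}
while an insecure $p_i$ (whose attack component has size $k_i$, and each of whose insecure neighbors then lies in that same component of size $k_i$) has
\begin{displaymath}
c_r^0(i,\vec{a}) = L\frac{k_i}{n} + \frac{F}{|\Gamma(p_i)|}\left( |\Gamma_{sec}(p_i)|\,C + |\Gamma_{\overline{sec}}(p_i)|\,L\frac{k_i}{n} \right).
\end{displaymath}
Note the term $|\Gamma_{sec}(p_i)|\,C$ is common to both expressions and cancels, exactly as in the proof of Lemma~\ref{lemma:ac-size}.

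Next I would set up the inoculation condition $c_r^0(i,\vec{a}) > c_r^1(i,\vec{a})$, cancel the common secure-neighbor term, and multiply both sides by $|\Gamma(p_i)|$ to clear that denominator. This yields, after multiplying through by $n/L$,
\begin{displaymath}
k_i\left(|\Gamma(p_i)| + F|\Gamma_{\overline{sec}}(p_i)|\right) > |\Gamma(p_i)|\cdot Cn/L + F\sum_{p_j \in \Gamma_{\overline{sec}}(p_i)} k_j,
\end{displaymath}
and dividing by the (positive) coefficient of $k_i$ gives precisely the claimed bound. The only real obstacle — a modeling subtlety rather than a computational one — is justifying that when $p_i$ switches from secure to insecure, every previously-insecure neighbor's attack component size becomes exactly $k_i$ (they merge into $p_i$'s component). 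This is the same implicit assumption underlying Lemma~\ref{lemma:ac-size}, so I would simply reuse it; the remaining steps are routine algebra identical in form to that earlier proof, just carrying the harmless extra factor $|\Gamma(p_i)|$.
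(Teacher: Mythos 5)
Your proposal is correct and follows essentially the same route as the paper's proof: write out $c_r^1$ and $c_r^0$ from Definition~\ref{rel_perceived cost}, cancel the common $|\Gamma_{sec}(p_i)|\,C$ term, and rearrange the inequality $c_r^0 > c_r^1$; the paper leaves the bound with $F/|\Gamma(p_i)|$ in numerator and denominator while you clear the factor $|\Gamma(p_i)|$ explicitly, which is only a cosmetic difference. The modeling assumption you flag (insecure neighbors' components merging into $p_i$'s component of size $k_i$) is indeed the same one implicitly used in Lemma~\ref{lemma:ac-size}.
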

\begin{proof}
Player $p_i$ will inoculate if and only if this choice lowers the
relative perceived individual cost. By Definition~\ref{rel_perceived
cost}, the relative perceived individual costs of an inoculated
player are
\begin{displaymath}
c_r^1(i,\vec{a}) = C + F/|\Gamma(p_i)| \cdot \left(
|\Gamma_{sec}(p_i)|{C} + \sum_{p_j \in
\Gamma_{\overline{sec}}(p_i)}{L \frac{k_j}{n}} \right)
\end{displaymath}
and for an insecure player we have
\begin{displaymath}
c_p^0(i,\vec{a}) = L \frac{k_i}{n} + F/|\Gamma(p_i)|\cdot \left(
|\Gamma_{sec}(p_i)|{C} + |\Gamma_{\overline{sec}}(p_i)|{L
\frac{k_i}{n}} \right).
\end{displaymath}
Thus, for $p_i$ to prefer to inoculate it must hold that
\begin{eqnarray*}
c_p^0(i,\vec{a}) &>& c_p^1(i,\vec{a}) ~~\Leftrightarrow \\
k_i&>& \frac{Cn/L + F/|\Gamma(p_i)| \cdot \sum_{p_j \in
\Gamma_{\overline{sec}}(p_i)}{k_j}}{1 + F/|\Gamma(p_i)|\cdot
|\Gamma_{\overline{sec}}(p_i)|}.
\end{eqnarray*}
\end{proof}

Not surprisingly, we can show that friendship is always beneficial
also with respect to relative perceived costs.
\begin{thm}\label{FFneversmaller1relative}
For all instances of the virus inoculation game and $0\leq F\leq 1$,
it holds that
$$
1 \leq \Upsilon(F,I) \leq  \textit{PoA}(I)
$$
also in the relative cost model.
\end{thm}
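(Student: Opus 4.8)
The plan is to mimic the proof of Theorem~\ref{FFneversmaller1} essentially verbatim, replacing the characterization of Lemma~\ref{lemma:ac-size} by that of Lemma~\ref{lemma:ac-size-rel}. As before, the upper bound $\Upsilon(F,I)\leq\textit{PoA}(I)$ is immediate: any rFNE has social cost at least $C_{OPT}(I)$ (the social optimum is a global minimum of the social cost, regardless of equilibrium concept), so the ratio of the worst NE to the worst rFNE is at most the ratio of the worst NE to $C_{OPT}$, which is $\textit{PoA}(I)$. Hence the only real work is the lower bound $\Upsilon(F,I)\geq 1$, i.e., exhibiting for every rFNE a (classic) NE of at least the same social cost.

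First I would take an arbitrary rFNE $\vec a$ (e.g.\ the worst one) for an instance $I$ with $F>0$, and run best-response dynamics in the selfish game ($F=0$) starting from $\vec a$; by~\cite{virusgame} this converges to some NE. If $\vec a$ is already a NE we are done. Otherwise some player $p_i$ wants to switch. The case $p_i$ insecure but wanting to inoculate is ruled out exactly as in Theorem~\ref{FFneversmaller1}: selfishly $p_i$ inoculates only if $k_i>Cn/L$, while by Lemma~\ref{lemma:ac-size-rel} staying insecure in the rFNE forces $k_i\leq\bigl(Cn/L+(F/|\Gamma(p_i)|)\sum_{p_j\in\Gamma_{\overline{sec}}(p_i)}k_j\bigr)/\bigl(1+(F/|\Gamma(p_i)|)|\Gamma_{\overline{sec}}(p_i)|\bigr)$; bounding each $k_j\leq k_i-1$ (since $p_j$ is in the same attack component once $p_i$ is insecure) gives $k_i\leq Cn/L-(F/|\Gamma(p_i)|)|\Gamma_{\overline{sec}}(p_i)|<Cn/L$, a contradiction. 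The remaining case is $p_i$ secure in the rFNE but preferring to be insecure selfishly. Since with $F=0$ all players share the same attack-component threshold $Cn/L$, a newly insecure player cannot cause anyone else to inoculate, and only $p_i$'s attack component is affected; its total cost increases by at least
\begin{displaymath}
x=\Bigl(\tfrac{k_i}{n}L-C\Bigr)+\sum_{p_j\in\Gamma_{\overline{sec}}(p_i)}\Bigl(\tfrac{k_i}{n}L-\tfrac{k_j}{n}L\Bigr).
\end{displaymath}
Now apply Lemma~\ref{lemma:ac-size-rel} to $p_i$ being secure in the rFNE: $k_i\leq\bigl(Cn/L+(F/|\Gamma(p_i)|)\sum k_j\bigr)/\bigl(1+(F/|\Gamma(p_i)|)|\Gamma_{\overline{sec}}(p_i)|\bigr)$, which rearranges to $\sum_{p_j\in\Gamma_{\overline{sec}}(p_i)}k_j\leq\bigl(k_i(|\Gamma(p_i)|+F|\Gamma_{\overline{sec}}(p_i)|)-|\Gamma(p_i)|\,Cn/L\bigr)/F$. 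Substituting and simplifying, the $|\Gamma_{\overline{sec}}(p_i)|$ and $|\Gamma(p_i)|$ terms should collapse just as in Theorem~\ref{FFneversmaller1}, leaving $x\geq\bigl(\tfrac{k_iL}{n}-C\bigr)\bigl(1-\tfrac{|\Gamma(p_i)|}{F}\bigr)>0$, where positivity uses $C>\tfrac{k_iL}{n}$ (a player relinquishes protection only then) together with $F\leq 1\leq|\Gamma(p_i)|$, so $1-|\Gamma(p_i)|/F\leq 0$. If several players defect, the cost only grows further. Thus some NE of cost at least $C_{FNE}(\vec a)$ exists, giving $\Upsilon(F,I)\geq 1$.

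The step I expect to need the most care is the final algebraic simplification of $x$: in the absolute model the factor $(1-1/F)$ appears cleanly, whereas here the $|\Gamma(p_i)|$ in the denominator of the relative term must be tracked so that one obtains $(1-|\Gamma(p_i)|/F)$ and can still conclude $x>0$. One should double-check the edge case $|\Gamma(p_i)|=0$ (an isolated player, where the relative cost is undefined); such a player has $k_i=1$ and never inoculates under $C\leq L$, so it is irrelevant, but it is worth a remark. Everything else transfers mechanically from the proof of Theorem~\ref{FFneversmaller1}.
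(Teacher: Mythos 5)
Your proposal is correct and follows essentially the same route as the paper's own proof: the upper bound from the definitions, and for the lower bound a best-response descent to a selfish NE where the two defection cases are ruled out, respectively controlled, via Lemma~\ref{lemma:ac-size-rel}, with the cost increase collapsing to $\left(\tfrac{k_iL}{n}-C\right)\left(1-\tfrac{|\Gamma(p_i)|}{F}\right)$ exactly as in the paper. Your remark on the isolated-player edge case is a small addition beyond what the paper states, but the argument is otherwise identical.
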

\begin{proof}
Again, the upper bound for the WoF, i.e., $\textit{PoA($I$)} \geq
\Upsilon(F,I)$, follows directly from the definitions (see also
proof of Lemma~\ref{FFneversmaller1}). For $\Upsilon(F,I) \geq 1$ we
start from a rFNE $\vec{a}$ (defined with relative costs) with $F>0$
and show that a best response execution yields a Nash equilibrium
$\vec{a}'$ with cost $C_a(\vec{a})\leq C_a(\vec{a}')$. If $\vec{a}$
is also a NE in the same instance with $F=0$ then we are done.
Otherwise there is at least one player $p_i$ that prefers to change
its strategy. If $p_i$ is insecure but favors inoculation, $p_i$'s
attack component has on the one hand to be of size at least $k_i' >
Cn/L$~\cite{virusgame} (otherwise there is not reason for $p_i$ to
become secure) and on the other hand of size at most $k_i'' =
(|\Gamma(p_i)|\cdot Cn/L + F \cdot \sum_{p_j \in
\Gamma_{\overline{sec}}(p_i)}{k_j}) / (|\Gamma(p_i)|+F\cdot
|\Gamma_{\overline{sec}}(p_i)|) \leq (|\Gamma(p_i)|\cdot Cn/L +
F\cdot |\Gamma_{\overline{sec}}(p_i)| (k_i'' - 1) ) /
(|\Gamma(p_i)|+F\cdot|\Gamma_{\overline{sec}}(p_i)|)$ so $k_i'' \leq
|\Gamma(p_i)|\cdot Cn/L - F|\Gamma_{\overline{sec}}(p_i)|$ (cf
Lemma~\ref{lemma:ac-size-rel}), yielding a contradiction. What if
$p_i$ is secure in the rFNE but prefers to be insecure in the NE?
Since every player has the same preference on the attack component's
size when $F=0$, a newly insecure player cannot trigger other
players to inoculate. Furthermore, only the players inside $p_i$'s
attack component are affected by this change. The total cost of this
attack component increases by at least (see also the proof of
Lemma~\ref{FFneversmaller1})
\begin{eqnarray*}
x = \frac{k_i}{n}L - C + \frac{L}{n}
(|\Gamma_{\overline{sec}}(p_i)|k_i - \sum_{p_j \in
\Gamma_{\overline{sec}}(p_i)}{k_j}).
\end{eqnarray*}
Applying Lemma~\ref{lemma:ac-size-rel} guarantees that
$$\sum_{p_j \in
\Gamma_{\overline{sec}}(p_i)} k_j \leq
\frac{k_i(1+F/|\Gamma(p_i)|\cdot|\Gamma_{\overline{sec}}(p_i)|)-C
n/L}{F/|\Gamma(p_i)|}.$$ This results in
\begin{eqnarray*}
x &\geq& \frac{L}{n} \left(|\Gamma_{\overline{sec}}(p_i)|k_i -
\frac{ k_i(1+F/|\Gamma(p_i)|\cdot|\Gamma_{\overline{sec}}(p_i)|)-C
n/L}{F/|\Gamma(p_i)|\cdot}\right)\\ &=& \frac{k_i
L}{n}\left(1-\frac{1}{F/|\Gamma(p_i)|\cdot}\right) - C
\left(1-\frac{1}{F/|\Gamma(p_i)|\cdot}\right)>0,
\end{eqnarray*}
since a player only gives up its protection if $C>\frac{k_i L}{n}$.
If more players are unhappy with their situation and become
vulnerable, the cost for the NE increases further. In conclusion,
there exists a NE for every FNE with $F\geq 0$ for the same instance
which is at least as expensive. \hfill\end{proof}

Interestingly, however, the phenomenon of a non-monotonic welfare
increase with larger $F$ does no longer hold in the star graph
$S_n$. To see this, note that there are only at most two distinct
rFNE in $S_n$ (apart from the trivial situations where all players
are either insecure or secure): the ``good equilibrium'' where the
center player is secure and all the leave players insecure, and the
``bad equilibrium'' where the center is insecure and a fraction of
the leaves secure. The following theorem shows that the example of
Theorem~\ref{thm:monotone} for FNE is no longer true for rFNE.
\begin{thm}\label{thm:rel-monotone}
The Windfall of Friendship is monotonic in $F$ for $S_n$ under the
relative cost model.
\end{thm}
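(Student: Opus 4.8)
The goal is to show that, in the star graph $S_n$ under the relative cost model, the Windfall of Friendship $\Upsilon(F,I)$ is monotonically non-decreasing in $F$. The key structural fact, already noted in the text, is that (ignoring the degenerate all-secure / all-insecure profiles) there are essentially two rFNE in $S_n$: the ``good'' one, which coincides with the social optimum $C + (n-1)L/n$ where only the center is secure; and a ``bad'' one, where the center is insecure and some number $n_0$ of leaves are insecure, the rest secure. Since the worst NE in $S_n$ (for $F=0$) does not depend on $F$, the numerator of $\Upsilon(F,I)$ is a constant. Hence monotonicity of $\Upsilon$ in $F$ is equivalent to monotonicity (non-increase) of the \emph{worst rFNE cost} in $F$, i.e.\ $\max_{\mathrm{rFNE}} C_a(F,I)$ is non-increasing as $F$ grows.

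\textbf{Step 1: characterize the bad rFNE as a function of $F$.} I would first specialize Lemma~\ref{lemma:ac-size-rel} to the center player of $S_n$: the center has degree $n-1$, so if $n_0$ leaves are insecure, the center's attack component has size $n_0+1$, and the center stays insecure iff $n_0+1 \le \big((n-1)\,Cn/L + F n_0 (n_0+1)\big)/\big((n-1)+F n_0\big)$ — i.e.\ iff $(n-1)(n_0+1) + F n_0(n_0+1) \le (n-1)Cn/L + F n_0(n_0+1)$, which simplifies to $n_0 + 1 \le Cn/L$, i.e.\ $n_0 \le Cn/L - 1$, \emph{independently of $F$}. (Note the relative weighting exactly cancels the dependence on $F$ for the center because its insecure neighbors all lie in its own attack component.) For a leaf (degree $1$), Lemma~\ref{lemma:ac-size-rel} gives the same condition as in the absolute model with $|\Gamma|=1$, so an insecure leaf stays insecure iff $n_0 + 1 \le Cn/L - F\cdot(\text{something})$; more precisely, redoing the leaf computation shows an insecure leaf is content iff $n_0+1 \le Cn/L - F$ roughly, and a secure leaf is content iff $n_0 \ge Cn/L - F - 1$ roughly. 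The upshot: in the bad rFNE, $n_0$ ranges in an interval whose endpoints shift \emph{downward} (toward fewer insecure leaves, hence more secure leaves) as $F$ increases.

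\textbf{Step 2: relate the bad-rFNE cost to $n_0$ and argue monotonicity.} The social cost of the bad rFNE with $n_0$ insecure leaves is $C(n - n_0) + (n_0+1)^2 L/n$ (the $n_0$ insecure leaves plus the insecure center form one attack component of size $n_0+1$, contributing $(n_0+1)\cdot L(n_0+1)/n$, plus $(n-1-n_0)$ secure leaves at cost $C$). The worst rFNE is obtained at the largest admissible $n_0$, call it $n_0^{\max}(F)$, since this cost function is increasing in $n_0$ over the relevant range (for $C \le L$ it is increasing once $n_0 \gtrsim Cn/(2L)$, and the equilibrium range lies above that, as in the $F=0$ analysis). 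By Step 1, $n_0^{\max}(F)$ is non-increasing in $F$ — a larger $F$ only tightens the leaf-stability upper bound $n_0 \le \lfloor Cn/L - F \rfloor - 1$ (or merges the bad rFNE into the good one once $n_0^{\max}$ drops below the lower bound, at which point the worst rFNE \emph{is} the social optimum). Therefore $\max_{\mathrm{rFNE}} C_a(F,I) = \max\{C + (n-1)L/n,\; C(n-n_0^{\max}(F)) + (n_0^{\max}(F)+1)^2 L/n\}$ is non-increasing in $F$, and dividing the constant worst-NE cost by it gives that $\Upsilon(F,I)$ is non-decreasing. I would also handle the boundary/degenerate profiles (all secure, all insecure) separately, observing that under our parameter assumptions $C \le L$, $C > L/n$ these are not rFNE, exactly as in the selfish case.

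\textbf{Main obstacle.} The delicate point is pinning down the leaf-stability thresholds precisely — getting the floors and the $\pm F$ shifts right in both the ``secure leaf content'' and ``insecure leaf content'' inequalities — and then verifying that the bad-rFNE cost $C(n-n_0)+(n_0+1)^2 L/n$ is genuinely monotone in $n_0$ \emph{throughout} the equilibrium interval, not just asymptotically; this requires checking that the lower endpoint of the $n_0$-interval stays above the vertex $n_0 = Cn/(2L) - 1$ of the parabola, which should follow from $C \le L$ but needs to be written out. The other subtlety is the transition regime: one must confirm that as $F$ increases past the point where the bad rFNE disappears, the worst rFNE cost drops (from the bad-rFNE value down to the social optimum) rather than jumping up — which is immediate since the social optimum is by definition the cheaper of the two — so no non-monotonicity can sneak in there. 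Once these pieces are in place the monotonicity conclusion is essentially bookkeeping.
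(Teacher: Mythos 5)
Your overall strategy is the same as the paper's: observe that the worst NE cost in the numerator is independent of $F$, identify the only two candidate rFNE in $S_n$ (center secure vs.\ center insecure with $n_0$ insecure leaves), and show that the stability conditions for the expensive ``bad'' equilibrium only tighten as $F$ grows, so the worst rFNE cost is non-increasing and $\Upsilon$ non-decreasing. The paper's proof is terser---it just writes out the center's and the insecure leaf's conditions and notes that the leaf condition carries the full $F$-term while the center's is damped by $1/(n-1)$, so unlike in the absolute model the leaves inoculate first and a larger $F$ can only destroy the bad equilibrium---whereas you additionally make explicit the reduction to monotonicity of the worst rFNE cost and the need to check that the cost $C(n-1-n_0)+(n_0+1)^2L/n$ is increasing in $n_0$ over the equilibrium range. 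Both of those explicit steps are worth having.

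One concrete slip in your Step 1: in Lemma~\ref{lemma:ac-size-rel} the quantities $k_j$ are the attack-component sizes of $p_i$'s insecure neighbors \emph{assuming $p_i$ is secure}. For the center of the star, a secure center isolates every insecure leaf, so each $k_j=1$ and $\sum_j k_j = n_0$, not $n_0(n_0+1)$ as you substituted. Consequently the claimed exact cancellation does not occur: the center's condition to remain insecure is (approximately) $n_0+1+Fn_0^2/(n-1)\le Cn/L$, which \emph{does} depend on $F$. This does not damage your conclusion---the dependence is in the tightening direction, and the binding constraint remains the leaves' condition $n_0+1+F\le Cn/L$ for the parameter range of interest, which is exactly what the monotonicity argument needs---but the intermediate claim ``independently of $F$'' is false as stated and should be corrected. (Also, the number of secure leaves is $n-1-n_0$, not $n-n_0$; harmless for monotonicity in $n_0$.)
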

\begin{proof}
Consider a friendship factor $F$. Clearly, the equilibrium where
only the center player is secure always exists (w.l.o.g., we focus
on ``reasonable values'' $C$ and $L$). When is there an equilibrium
where the center is insecure? Consider such an equilibrium where $x$
leave players are insecure. In order for this to constitute an
equilibrium, it must hold for the center player that:
\begin{footnotesize}
$$
\frac{(x+1)L}{n}+\frac{F}{n-1}\cdot \frac{(x+1)L}{n}+\frac{F\cdot
C\cdot (n-x-1)}{n-1}< C + \frac{F}{n-1}\cdot \frac{x\cdot
L}{n}+\frac{F\cdot C\cdot (n-x-1)}{n-1}
$$
$$
\Leftrightarrow \frac{(x+1)L}{n}+\frac{F}{n-1}\cdot \frac{L}{n}<C
$$
\end{footnotesize}
On the other hand, for an insecure leaf player we have:
$$
\frac{(x+1)L}{n}+\frac{FL(x+1)}{n}< C + \frac{FLx}{n}
$$
$$
\Leftrightarrow \frac{(x+1)L}{n}+\frac{FL}{n}<C
$$
Unlike in the FNE scenario, the center player is less likely to
inoculate, i.e., leaf players inoculate first. Thus, a larger $F$
can only render the existence of such an equilibrium more unlikely.
\hfill\end{proof}

Finally, note that the hardness result of
Theorem~\ref{NP-completeness} is also applicable to relative FNEs.
\begin{thm}\label{rNP-completeness}
Computing the best and the worst pure rFNE is $\mathcal{NP}$-complete
for any $F\in [0,1]$.
\end{thm}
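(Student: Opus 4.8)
The plan is to follow the proof of Theorem~\ref{NP-completeness} almost verbatim, substituting Lemma~\ref{lemma:ac-size-rel} for Lemma~\ref{lemma:ac-size} wherever attack-component thresholds are invoked. As before, I fix an arbitrary graph $G=(V,E)$, set $C=1$ and $L=n/1.5$ (so $Cn/L=1.5$), and argue that a strategy profile is an rFNE if and only if (a) every neighbor of an insecure player is secure, and (b) every inoculated player has at least one insecure neighbor. Membership in $\mathcal{NP}$ is immediate, since checking the rFNE condition for a given profile takes polynomial time.

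The crux is re-establishing the characterization (a)--(b) for \emph{relative} perceived costs. Condition (b) follows from the standing assumption $C>L/n$ exactly as in the absolute model: an inoculated player all of whose neighbors are secure strictly lowers its relative perceived cost by $C-L/n>0$ upon defecting, because its (secure) neighbors keep paying $C$ and so the averaged friendship term is unchanged. For condition (a), suppose an insecure player $p_i$ lies in an attack component of size $k_i\ge 2$. When $p_i$ inoculates, each insecure neighbor's attack component shrinks from $k_i$ to at most $k_i-1$, so the term $\frac{F}{|\Gamma(p_i)|}\sum_{p_j\in\Gamma(p_i)}c_a(j,\vec{a})$ does not increase; hence $p_i$'s relative perceived cost drops by at least $\frac{k_i}{n}L-C$, which, since $k_i\ge 2$ gives $\frac{k_i}{n}L\ge\frac{2}{1.5}=\frac43>1=C$, is strictly positive, so $p_i$ would defect, a contradiction. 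For the converse, if (a) and (b) hold then every insecure player sits in a size-$1$ component and, by $L/n<C$, does not want to inoculate; and every inoculated player, upon defecting, joins a component of size at least $2$ by (b), incurring cost at least $\frac{2}{n}L=\frac43>C$ while the averaged friendship term cannot decrease (secure neighbors keep paying $C$, insecure ones weakly more). Hence (a)--(b) exactly characterize rFNEs.

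With the characterization in place the two reductions carry over unchanged, since they use only (a), (b), and the fact that in any such equilibrium each insecure player is alone in its component and contributes exactly $L/n$. A set $V'$ of secure players satisfying (a)--(b) is precisely a minimal vertex cover, so $G$ has a vertex cover of size $k$ iff it has an rFNE of cost at most $Ck+(n-k)L/n$ (the cheapest-rFNE reduction from \textsc{Vertex Cover}). Dually, the set of insecure players satisfying (a)--(b) is precisely an independent dominating set, so $G$ has an independent dominating set of size at most $k$ iff it has an rFNE with at least $n-k$ secure players (the most-expensive-rFNE reduction from \textsc{Independent Dominating Set}). Both target problems being $\mathcal{NP}$-hard, computing the best and the worst rFNE is $\mathcal{NP}$-complete for every $F\in[0,1]$.

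The one place that needs genuine care is the monotonicity of the averaged friendship term under a unilateral deviation: dividing by $|\Gamma(p_i)|$ could in principle interact with the change in a neighbor's attack-component size. The resolution is to note explicitly that (i) secure players always pay exactly $C$ and are therefore unaffected by a neighbor's deviation, and (ii) an insecure neighbor's component can only shrink when $p_i$ becomes secure and can only grow when $p_i$ becomes insecure; in both cases the friendship term moves in the ``harmless'' direction, so the comparison of $\frac{k_i}{n}L$ against $C$ alone decides the deviation, just as in the absolute model. Everything else is a routine transcription of the earlier argument.
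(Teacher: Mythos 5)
Your proposal is correct and follows essentially the same route as the paper: the paper's proof is itself a sketch that transcribes the Theorem~\ref{NP-completeness} argument and only re-verifies condition (a) under relative costs, computing $\Delta_i = k_iL/n - C + FL|\Gamma_{\overline{sec}}(p_i)|/(|\Gamma(p_i)|n) > 0$ exactly where you instead lower-bound the averaged friendship term's change by zero and rely on $k_iL/n \geq 4/3 > C$. Both verifications are valid, and your additional care with condition (b) and the converse direction just fills in details the paper leaves implicit.
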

\begin{proof} (\emph{Sketch})
Again, deciding the existence of a rFNE with cost less than $k$ or
more than $k$ is at least as hard as solving the \emph{vertex cover}
or \emph{independent dominating set} problem, respectively. Note
that verifying whether a proposed solution is correct can be done in
polynomial time, hence the problems are indeed in $\mathcal{NP}$.
The proof is similar to Theorem~\ref{NP-completeness}, and we only
point out the difference for condition (a): an insecure player $p_i$
in the attack component bears the cost $k_i/n\cdot L +
F|\Gamma_{sec}(p_i)| C + |\Gamma_{\overline{sec}}(p_i)| \cdot
(k_iL/n)/|\Gamma(p_i)|$, and changing its strategy reduces the cost
by at least $\Delta_{i} = k_iL/n+ F |\Gamma_{\overline{sec}}(p_i)|
k_i L/ (|\Gamma(p_i)|  n)  - C - F |\Gamma_{\overline{sec}}(p_i)|
(k_i-1) L/(|\Gamma(p_i)| n) = k_iL/n - C + F L
|\Gamma_{\overline{sec}}(p_i)|/(|\Gamma(p_i)| n)$. By our assumption
that $k_i \geq 2$, and hence $|\Gamma_{\overline{sec}}(p_i)| \geq
1$, it holds that $\Delta_{i} > 0$, resulting in $p_i$ becoming
secure.
\end{proof}

\section{Convergence}\label{sec:convergence}

According to Lemma~\ref{FFneversmaller1} and
Lemma~\ref{FFneversmaller1relative}, the social context can only
improve the overall welfare of the players, both in the absolute and
the relative friendship model. However, there are implications
beyond the players' welfare in the equilibria: in social networks,
the dynamics of how the equilibria are reached is different.

In~\cite{virusgame}, Aspnes et al.~have shown that best-response
behavior quickly leads to some pure Nash equilibrium, from any
initial situation. Their potential function argument however relies
on a ``symmetry'' of the players in the sense insecure players in
the same attack component have the same cost. This no longer holds
in the social context where different players take into account
their neighborhood: a player with four insecure neighbors is more
likely to inoculate than a player with just one, secure neighbor.
Thus, the distinction between ``big'' and ``small'' components used
in~\cite{virusgame} cannot be applied, as different players require
a different threshold.

Nevertheless, convergence can be shown in certain scenarios. For
example, the hardness proofs of Lemmas~\ref{NP-completeness}
and~\ref{rNP-completeness} imply that equilibria always exist in the
corresponding areas of the parameter space, and it is easy to see
that the equilibria are also reached by best-response sequences.
Similarly, in the star and complete networks, best-response
sequences converge in linear time. Linear convergence time also
happens in more complex, cyclic graphs. For example, consider the
cycle graph $C_n$ where each player is connected to one left and one
right neighbor in a circular fashion. To prove best response
convergence from arbitrary initial states, we distinguish between an
initial phase where certain structural invariants are established,
and a second phase where a potential function argument can be
applied with respect to the view of only one type of players. Each
event when one player is given the chance to perform a best response
is called a \emph{round}.

\begin{thm}\label{thm:cycconv}
From any initial state and in the cycle graph $C_n$, a best response
round-robin sequence results in an equilibrium after $O(n)$ changes,
both in case of absolute and relative friendship equilibria.
\end{thm}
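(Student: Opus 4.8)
The plan is to follow the two-phase structure that the theorem statement already hints at. First I would classify the players in the cycle by their local configuration. At any point in a round-robin execution, a player $p_i$ decides according to Lemma~\ref{lemma:ac-size} (or Lemma~\ref{lemma:ac-size-rel} in the relative model), and since a node in $C_n$ has exactly two neighbors, the only parameters that enter the threshold are the size $k_i$ of $p_i$'s attack component and the number of insecure neighbors $|\Gamma_{\overline{sec}}(p_i)| \in \{0,1,2\}$ together with the sizes $k_j$ of the (at most two) adjacent insecure components. So the relevant ``types'' of a node are: secure; insecure with $0$ insecure neighbors (an isolated insecure node, component size $1$); insecure with $1$ insecure neighbor (an endpoint of an insecure path); insecure with $2$ insecure neighbors (an interior node of an insecure path or the whole cycle insecure). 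I would compute, for each type, the exact threshold on $k_i$ below which the node stays insecure and above which it inoculates; these are small explicit expressions in $C, L, n, F$.

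Next comes Phase 1, establishing structural invariants. The key observation is that in $C_n$ the insecure nodes form a union of arcs (paths), and interior nodes of a long insecure arc have the \emph{largest} incentive to inoculate (two insecure neighbors, large $k_i$), while the endpoints have a smaller incentive. I would argue that after $O(n)$ best-response moves the configuration stabilizes into one where every maximal insecure arc has bounded length — indeed, using the bound from Theorem~\ref{FFneversmaller1} / Lemma~\ref{lemma:ac-size} that an insecure component cannot exceed roughly $Cn/L$ once no node wants to inoculate, but more usefully that any arc longer than the type-``interior'' threshold will have its interior nodes inoculate, splitting it. A clean way to phrase this: define a potential that counts (number of insecure nodes with two insecure neighbors whose component exceeds the interior threshold); each such ``bad'' node that moves strictly decreases this count and, crucially, a round-robin pass touches every node once, so within one or two passes ($O(n)$ rounds) no ``oversized'' arc remains. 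After Phase 1 every insecure arc has length below a fixed constant-or-$O(Cn/L)$ bound and, more importantly, no \emph{interior} insecure node ever wants to move again.

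For Phase 2 I would then restrict attention to the remaining degrees of freedom: endpoint insecure nodes and isolated insecure nodes deciding whether to inoculate, and secure nodes deciding whether to become insecure. Here I'd set up a potential function à la Aspnes et al.~\cite{virusgame}, but evaluated from the viewpoint of a single player type — e.g. $\Phi = \alpha\cdot(\#\text{secure nodes}) + \sum_{\text{insecure arcs}} f(\text{arc length})$ for suitably chosen constants $\alpha$ and convex-ish $f$ — and show each best-response move in Phase 2 strictly decreases $\Phi$ by at least a fixed amount, while $\Phi$ ranges over $O(n)$ values. Because the invariants of Phase 1 are preserved (an interior node never wakes up), the argument is monotone and terminates in $O(n)$ further rounds. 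The same skeleton works in the relative model: Lemma~\ref{lemma:ac-size-rel} with $|\Gamma(p_i)| = 2$ just changes the numerical thresholds (replace $F$ by $F/2$ effectively), not the combinatorial structure, so Phase 1 and Phase 2 go through verbatim with adjusted constants.

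The main obstacle I anticipate is Phase 1, specifically proving that the structural invariants are actually \emph{established} and then \emph{stay} established under round-robin scheduling — one has to rule out cascades where inoculating an interior node shrinks a neighbor's component, prompting a previously-inoculated node to become insecure again, re-growing an arc. Pinning down the right potential (or the right order of events within a pass) so that such oscillations are impossible, and bounding the whole thing by $O(n)$ rather than $O(n^2)$, is the delicate part; the rest is bookkeeping with the explicit thresholds. A secondary subtlety is handling boundary cases — the all-insecure cycle, very short cycles, and parameter regimes near $C = L$ or $C = L/n$ — which I would dispatch separately at the start.
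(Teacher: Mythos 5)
Your two-phase skeleton (establish structural invariants in an initial round-robin pass, then run a potential-function argument tuned to one player type) matches the paper's proof in outline, but the specific invariant your argument hinges on is the wrong one, and the gap you yourself flag at the end is exactly where the proof lives. You want Phase~1 to guarantee that \emph{no interior insecure node ever moves again}, i.e.\ that every insecure arc is permanently below the interior-node threshold. This cannot be preserved: in your Phase~2 you explicitly allow secure nodes to become insecure, and such a move can merge two arcs (or extend one) past the interior threshold, at which point interior nodes want to inoculate again. So the invariant is destroyed by the very moves your Phase~2 must accommodate, and no argument is given to rule out the resulting re-growth/re-split cascades or to bound them by $O(n)$ rather than $O(n^2)$.

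The paper proves the \emph{complementary} invariant: after two round-robin passes, an insecure player that is adjacent to a secure player can never again prefer to inoculate (because any node that later joins its component has at least as many insecure neighbors and hence a lower inoculation threshold), and there are never three consecutive secure players. Interior inoculations are \emph{not} excluded; they are absorbed into the potential $\Phi(\vec{a}) = \sum_{A\in \mathcal{S}_{big}} |A| - \sum_{A\in \mathcal{S}_{small}} |A|$, where ``big'' vs.\ ``small'' is cut at the threshold $t$ of a player with two insecure neighbors. Then a secure node with two insecure neighbors going insecure and an interior insecure node inoculating each decrease $\Phi$ by at least one, while the only potential-increasing move (a secure player with a secure neighbor going insecure) occurs fewer than $n$ times, and $|\Phi|\le n$, so amortization gives $O(n)$ changes. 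To repair your proposal you would need to either prove your interior-stability invariant is restored within $O(1)$ amortized moves per violation (which is essentially this amortization argument in disguise) or switch to the paper's invariant about endpoints; your proposed potential $\alpha\cdot(\#\text{secure}) + \sum_{\text{arcs}} f(\text{arc length})$ is not shown to decrease under the arc-merging moves that cause the trouble. Your remark that the relative model only rescales the thresholds (effectively $F\mapsto F/2$ on the cycle) is correct and matches the paper.
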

\begin{proof}
After two round-robin phases where each player is given the chance
to make a best response twice (at most $2n$ changes or rounds), it
holds that an insecure player $p_1$ which is adjacent to a secure
player $p_2$ cannot become secure: since $p_1$ preferred to be
insecure at some time $t$, the only reason to become secure again is
the event that a player $p_3$ becomes insecure in $p_1$'s attack
component at time $t'>t$; however, since $p_1$ has a secure neighbor
$p_2$ and hence $p_3$ can only have more insecure neighbors than
$p_1$, $p_3$ cannot prefer a larger attack component than $p_1$,
which yields a contradiction to the assumption that $p_1$ becomes
secure while its neighbor $p_2$ is still secure. Moreover, by the
same arguments, there cannot be three consecutive secure players.

Therefore, in the best response rounds after the two initial phases,
there are the following cases. Case~(A): a secure player having two
insecure neighbors becomes insecure; Case~(B): a secure player with
one secure neighbor becomes insecure; and Case~(C): an insecure
player with two insecure neighbors becomes secure.

In order to prove convergence, the following potential function
$\Phi$ is used:
$$
\Phi(\vec{a}) = \sum_{A\in \mathcal{S}_{big}(\vec{a})} |A| -
\sum_{A\in \mathcal{S}_{small}(\vec{a})} |A|
$$
where the attack components $A$ in $\mathcal{S}_{big}$ contain more
than $t=nC/(FL)-L/F+1$ players and the attack components $A$ in
$\mathcal{S}_{small}$ contain at most $t$ players in case of
absolute friendship equilibria; for relative friendship equilibria
we use $t=2Cn/(FL)-2L/F+1$. In other words, the threshold $t$ to
distinguish between small and big components is chosen with respect
to players having \emph{two insecure neighbors}; in case of absolute
FNEs:
$$
C+F\cdot\frac{L\cdot(t-1)}{n}=\frac{t\cdot L}{n}+2F\cdot\frac{L\cdot
t}{n} \Leftrightarrow \frac{Cn}{FL}-\frac{L}{F}+1=t
$$
and in case of relative FNEs:
$$
C+F/2\cdot\frac{L\cdot(t-1)}{n}=\frac{t\cdot
L}{n}+F\cdot\frac{L\cdot t}{n} \Leftrightarrow
\frac{2Cn}{FL}-\frac{2L}{F}+1=t
$$

Note that it holds that $-n\leq\Phi(\vec{a})\leq n,
\forall~\vec{a}$. We now show that Case~(A) and~(C) reduce
$\Phi(\vec{a})$ by at least one unit in each best response.
Moreover, Case~(B) can increase the potential by at most one.
However, since we have shown that Case~(B) incurs less than $n$
times, the claim follows by an amortization argument.
\emph{Case~(A):} In this case, a new insecure player $p_1$ is added
to an attack component in $\mathcal{S}_{small}$. \emph{Case~(B):} A
new insecure player $p_1$ is added to an attack component in
$\mathcal{S}_{small}$ or to an attack component in
$\mathcal{S}_{big}$ (since $p_1$ is ``on the edge'' of the attack
component, it prefers a larger attack component). \emph{Case~(C):}
An insecure player is removed from an attack component in
$\mathcal{S}_{big}$.
\end{proof}
The proof of Theorem~\ref{thm:cycconv} can be adapted to show linear
convergence in general 2-degree networks where players have degree
at most two. In order to gain deeper insights into the convergence
behavior, we conducted several experiments.

\section{Simulations}\label{sec:simulations}

This section briefly reports on the simulations conducted on
Kleinberg graphs (using clustering exponent $\alpha=2$). Although
the existence of equilibria and the best-response convergence time
complexity for general graphs remain an open question, during the
thousands of experiments, we did not encounter a single instance
which did not converge. Moreover, our experiments indicate that the
initial configuration (i.e., the set of secure and insecure players)
as well as the relationship of $L$ to $C$ typically has a negligible
effect on the convergence time, and hence, unless stated otherwise,
the following experiments assume an initially completely insecure
network and $C=1$ and $L=4$. All experiments are repeated 100 times
over different Kleinberg graphs.

All our experiments showed a positive Windfall of Friendship that
increases monotonically in $F$, both for the relative and the
absolute friendship model. Figure~\ref{fig:socialcost} shows a
typical result. Maybe surprisingly, it turns out that the windfall
of friendship is often not due to a higher fraction of secure
players, but rather the fact that the secure players are located at
strategically more beneficial locations (see also
Figure~\ref{fig:numsec}). We can conclude that there is a windfall
of friendship not only for the worst but also for ``average
equilibria''.
\begin{figure*} [ht]
\begin{center}
\includegraphics[width=0.595\textwidth]{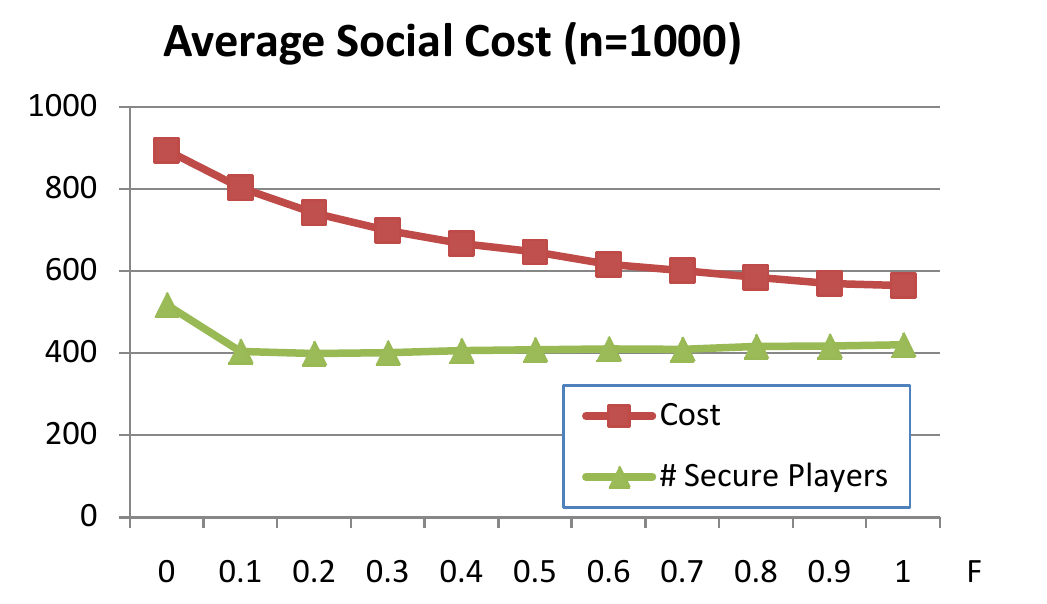}\\
\caption{Average social cost and average number of secure players as
a function of $F$, in the FNE resulting from round-robin best
response sequences starting from an initially completely insecure
graph.}\label{fig:socialcost}
\end{center}
\end{figure*}
\begin{figure*} [ht]
\begin{center}
\includegraphics[width=0.895\textwidth]{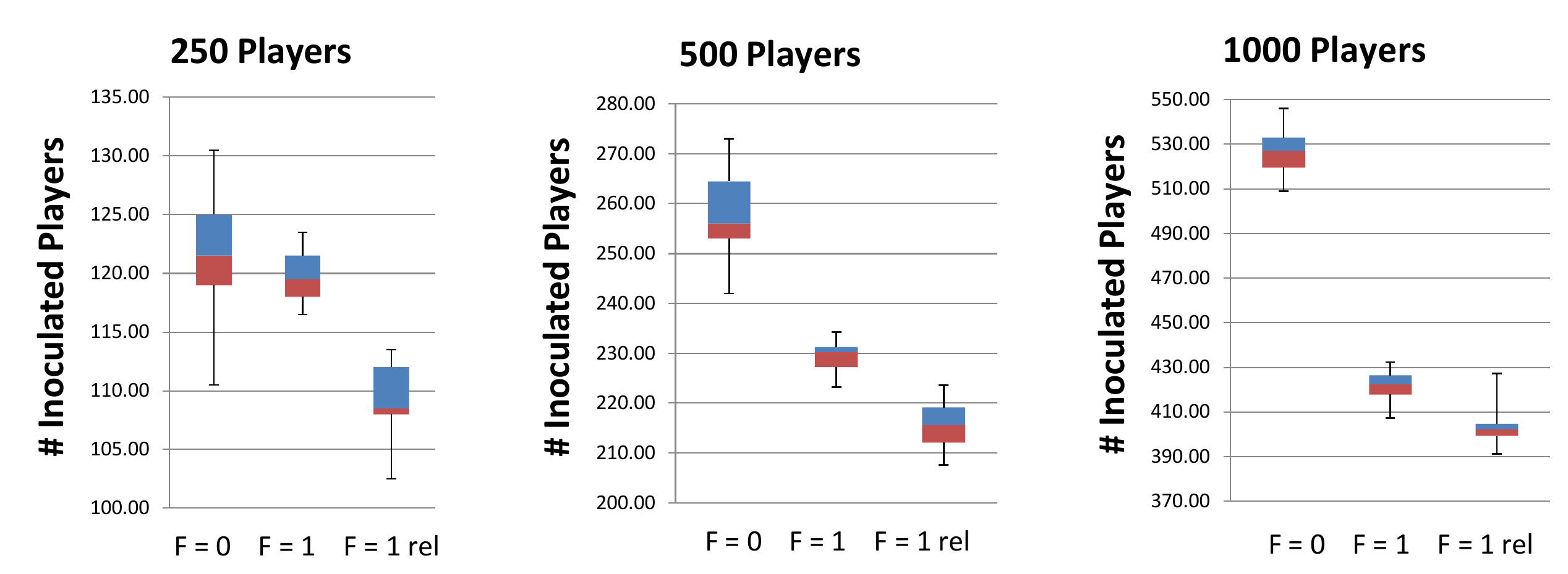}\\
\caption{Number of secure players in different models using $L=16$:
friendship often does not increase the number but yields better
locations of the secure players.}\label{fig:numsec}
\end{center}
\end{figure*}

The box plots in Figure~\ref{fig:boxplotcost} give a more detailed
picture of the cost for $F\in\{0,1\}$. The overall cost of pure NE
is typically higher than the cost of rFNE which is in turn higher
than the cost of FNE.
\begin{figure*} [ht]
\begin{center}
\includegraphics[width=0.895\textwidth]{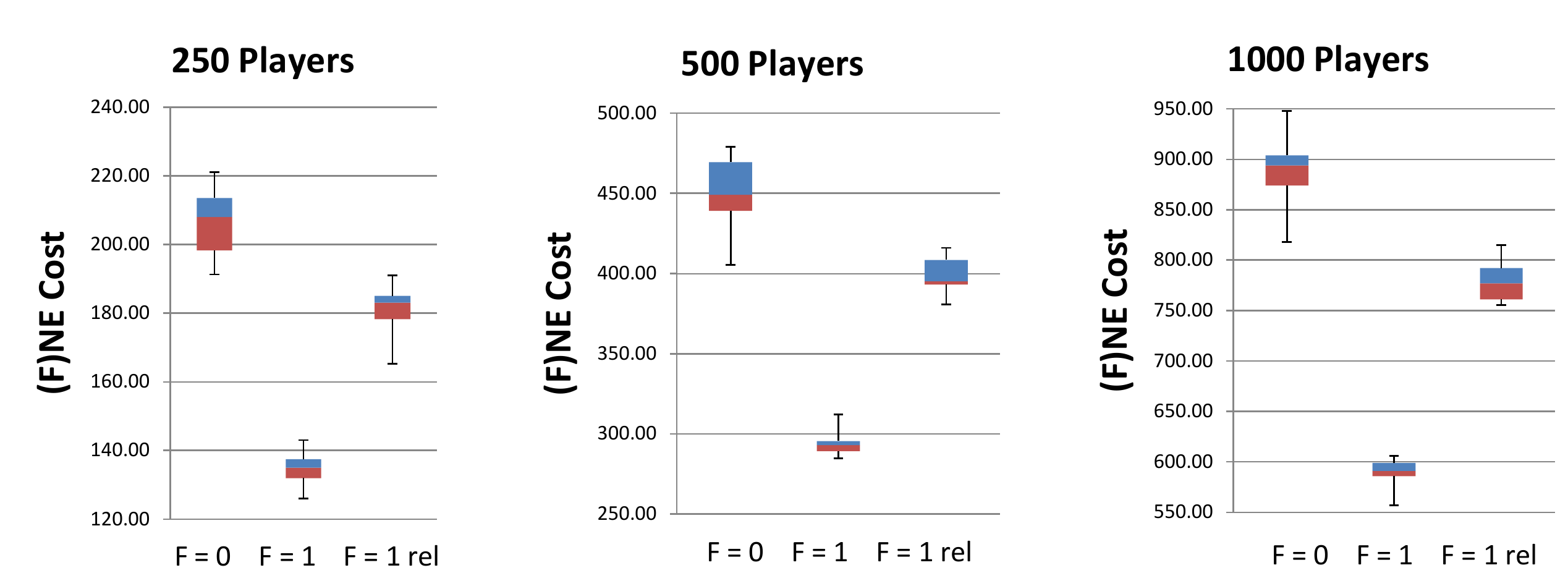}\\
\caption{Box plots of social cost in different scenarios. The
considered equilibria resulted from round-robin best response
sequences starting from an initially completely insecure
graph.}\label{fig:boxplotcost}
\end{center}
\end{figure*}

Besides social cost, we are mainly interested in convergence times.
We find that while the convergence time typically increases already
for a small $F>0$, the magnitude of $F$ plays a minor role.
Figure~\ref{fig:convf} shows the typical convergence times as a
function of $F$. Notice that the convergence time more than doubles
when changing from the selfish to the social model but is roughly
constant for all values of $F$.
\begin{figure*} [ht]
\begin{center}
\includegraphics[width=0.595\textwidth]{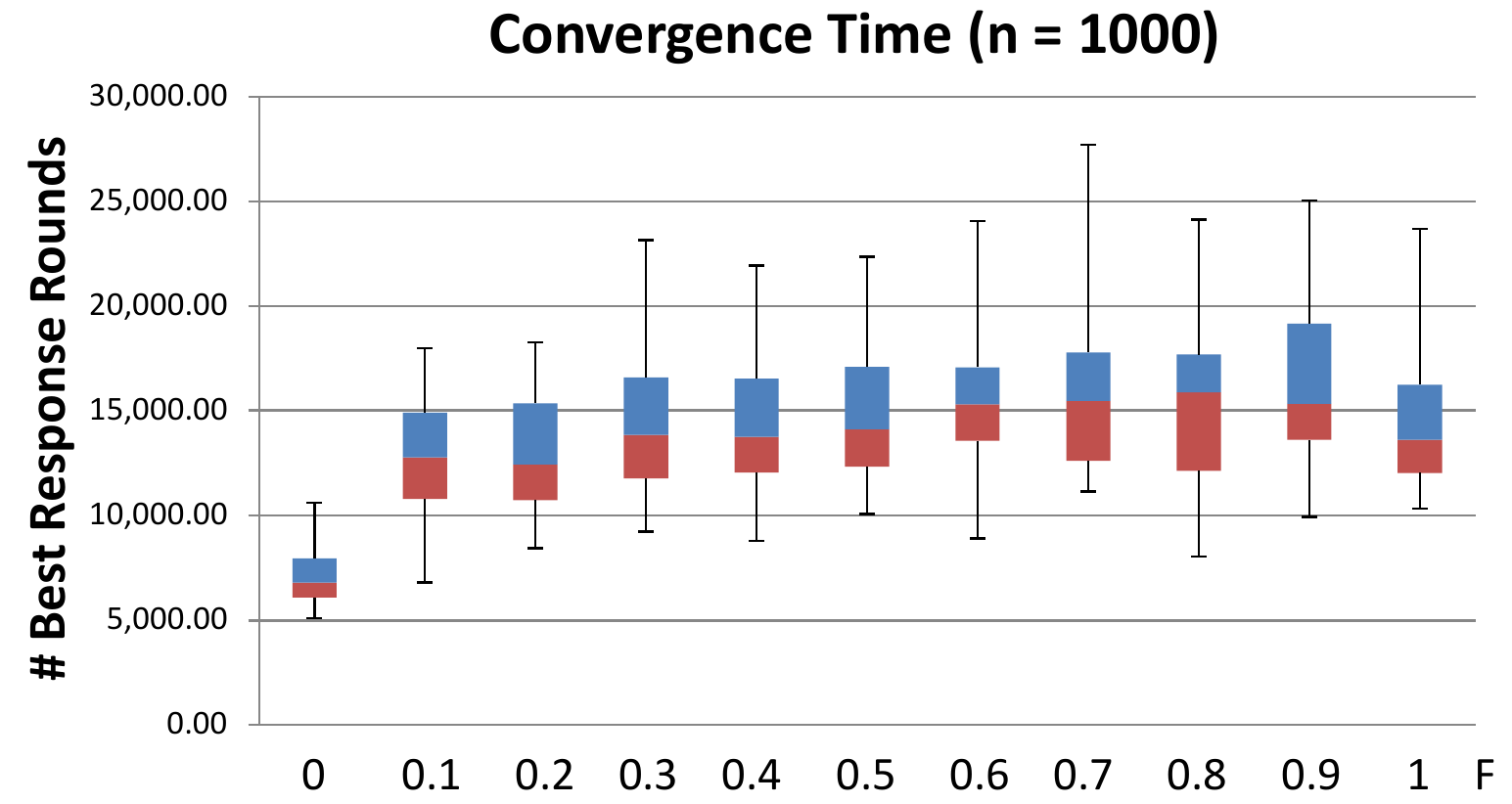}\\
\caption{Box plot of number of best response rounds until
convergence to FNE, starting from an initially completely insecure
graph.}\label{fig:convf}
\end{center}
\end{figure*}

%The box plots in Figure~\ref{fig:boxiter} provide a more thorough
%characterization of the convergence times, for different network
%sizes and using one hundred repetitions each (i.e., random order of
%best-response sequence). The plots indicate that runtimes in social
%settings have a higher variance, and that the convergence times to
%FNEs are usually slightly higher than the convergence times to
%rFNEs.
%\begin{figure*} [ht]
%\begin{center}
%\includegraphics[width=0.895\textwidth]{}\\
%\caption{Box plot of convergence times for different scenarios of
%round-robin best response sequences from initially completely
%insecure graph}\label{fig:boxiter}
%\end{center}
%\end{figure*}

%Figure~\ref{fig:conv-size} shows the convergence times as a function
%of network size.
%\begin{figure*} [ht]
%\begin{center}
%%\includegraphics[width=0.95\columnwidth]{allpic}\\
%\includegraphics[width=0.495\textwidth]{simulation/br-kleinberg-abs}~
%\includegraphics[width=0.495\textwidth]{simulation/br-kleinberg-rel}\\
%\caption{Convergence times of round-robin best response strategies
%from initially completely insecure graph. \textbf{TODO: parameters
%$C,L,F,kleinberg$, etc.} \emph{Left:} Time to reach FNE.
%\emph{Right:} Time to reach rFNE.}\label{fig:conv-size}
%\end{center}
%\end{figure*}

\section{Conclusion}\label{sec:conclusion}

This article presented a framework to study and quantify the effects
of game-theoretic behavior in social networks. This framework allows
us to formally describe and understand phenomena which are often
well-known on an anecdotal level. For instance, we find that the
Windfall of Friendship is always positive, and that players embedded
in a social context may be subject to longer convergence times.
Moreover, interestingly, we find that the Windfall of Friendship
does not always increase monotonically with stronger social ties.

We believe that our work opens interesting directions for future
research. We have focused on a virus inoculation game, and
additional insights must  be gained by studying alternative and more
general games such as potential games, or games that do and do not
exhibit a Braess paradox. Also the implications on the games'
dynamics need to be investigated in more detail, and it will be
interesting to take into consideration behavioral models beyond
equilibria (e.g.,~\cite{guest2}). Finally, it may be interesting to
study scenarios where players care not only about their friends but
also, to a smaller extent, about friends of friends.

What about practical implications? One intuitive takeaway of our
work is that in case of large benefits of social behavior, it may
make sense to design distributed systems where neighboring players
have good relationships. However, if the resulting convergence times
are large and the price of the dynamics higher than the possible
gains, such connections should be discouraged. Our game-theoretic
tools can be used to compute these benefits and convergence times,
and may hence be helpful during the design phase of such a system.

\section*{Acknowledgments}

We would like to thank Yishay Mansour and Boaz Patt-Shamir from Tel
Aviv University and Martina H\"ullmann and Burkhard Monien from
Paderborn University for interesting discussions on relative
friendship equilibria and aspects of convergence.

{ \balance

}

%\begin{appendix}
%\input{appendix}
%\end{appendix}

\end{document}